\def\bb0{{\mathbb{0}}}
\def\bb{{\mathbf{b}}}
\def\bp{{\mathbf{p}}}
\def\b0{{\mathbf{0}}}
\def\bA{{\mathbf{A}}}
\def\b1{{\mathbf{1}}}
\def\sf0{{\mathsf{0}}}
\def\nn{\nonumber}
\DeclarePairedDelimiter{\ceil}{\lceil}{\rceil}
\DeclarePairedDelimiter{\floor}{\lfloor}{\rfloor}
\newtheorem{theorem}{Theorem}
\newtheorem{prop}[theorem]{Proposition}
\newtheorem{lemma}[theorem]{Lemma}
\newcommand{\e}{\epsilon}
\newcommand{\M}{\mathcal{M}}
\newcommand{\A}{\mathcal{A}}
\renewcommand{\O}{\mathcal{O}}
\newcommand{\tA}{\tilde{\mathcal{A}}}
\newcommand{\W}{\mathcal{W}}
\newcommand{\w}{\mathbf{w}}
\newcommand{\off}{\text{off}}
\newcommand{\avg}{\text{avg}}
\newcommand{\pr}{\text{Pr}}
\newcommand{\worst}{\text{worst}}
\newcommand{\MWM}{\text{MWM}}
\newcommand{\mw}{\text{mw}}
\newcommand\undermat[2]{%
  \makebox[0pt][l]{$\smash{\underbrace{\phantom{%
    \begin{matrix}#2\end{matrix}}}_{\text{$#1$}}}$}#2}
\begin{document}
\title{Online Algorithms for Basestation Allocation}
\author{Andrew Thangaraj and Rahul Vaze
\thanks{A. Thangaraj is with the Department of Electrical Engineering, Indian Institute of Technology Madras, email: andrew@ee.iitm.ac.in}
\thanks{R. Vaze is with the School of Technology and Computer Science, Tata Institute of Fundamental Research Mumbai, email: vaze@tcs.tifr.res.in}}
\maketitle
\begin{abstract} Design of {\it online algorithms} for assigning mobile users to basestations is considered with the objective of maximizing the sum-rate, when all users associated to any one basestation equally  share each basestation's resources. Each user on its arrival reveals the rates it can obtain if connected to each of the basestations, and the problem is to assign each user to any one basestation irrevocably so that the sum-rate is maximized at the end of all user arrivals, without knowing the future user arrival or rate information or its statistics at each user arrival. 
Online algorithms with constant factor loss in comparison to offline algorithms (that know both the user arrival and user rates profile in advance) are derived. The proposed online algorithms are motivated from the famous online $k$-secretary problem and online maximum weight matching problem.

\end{abstract}
\section{Introduction}
\label{sec:introduction}
Associating mobile users to basestations under different utility models is a classical problem that has been well studied in literature. Many different utility models have been considered in prior work including load balancing \cite{Viswanathan2003LoadBalancing, Chen2007ProcSharing, DeVeciana2009LoadBalancing, Altman2011LoadBalancing, DeVeciana2012userAssociation}, cell breathing \cite{Bejerano2009CellBreathing}, call admission \cite{Wu2000MACA}, fairness \cite{Bejerano2004Fairness}. Another class of problems that resembles basestation association is that of sub-carrier/power allocation in a multi-carrier system either jointly with base-station allocation \cite{Subramanian2009userAssociation, Subramanian2009auserAssociation} or without it \cite{Kivanc2000SpectrumPowerAllocation, Kim2005SpectrumPowerAllocation, Yates2009SpectrumAllocation}.

Almost all of the prior work on basestation association or sub-carrier allocation either assumes that exact information about all the users is available (number of users and their channel gains) or their statistics is known and solves a joint optimization problem. In practice, however, loads in cellular networks are dynamic, i.e. users keep coming and going out of the system, and decisions made at any fixed time cease to be optimal in a short time window. More importantly, most prior work assumes an homogenous model, where the rates of each user achievable from different basestations are independent and identically distributed. This is a serious limitation in practice, since there is correlation in rates obtained by users that are physically close to each other, and there is correlation between rate obtained by a user from different basestations because of propagation environment, e.g. if the user is within a building or has severe shadowing losses or has highly directional antennas.

To address this issue, we take a different viewpoint and design {\it online} algorithms for the basestation association problem that do not assume any information about the statistics of the user's profile, i.e. the rates obtained by different users from different basestations can be arbitrarily correlated. In an online algorithm, each user arrives sequentially and is allocated to one of the possible basestations and the association once made cannot be revoked. We assume a {\it closed} system following \cite{Ganesh2010LoadBalancing}, where a total of $N$ users arrive one at a time and stay in the system for the full time duration of interest.  We consider the downlink basestation association problem, where each user on its arrival reveals the rates (computed by basestations using SINR) it can obtain on each of the basestations, and we are interested in finding online algorithms for associating each mobile user to one of the basestations for maximizing the sum-rate at the end of all $N$ user arrivals. We assume that all the users associated with any one basestation equally share the time/bandwidth. The uplink problem is more challenging because of the need for power control and presently out of scope of this paper.

We note that by considering a fixed time duration of interest, reasonable estimates of $N$ can be obtained. Of course, in a practical system, users also leave the system once their service is over; however, by constraing the time duration, we can reasonably handle this restriction. Allowing general {\it user exit} functionality is currently out of scope of this paper.

The performance of an online algorithm is characterized by its \emph{competitive ratio}, which is defined as the ratio of the utility of the offline algorithm (that knows the complete future inputs) to that of the online algorithm \cite{BorodinOnlineBook}. The problem is then to find online algorithms with smallest possible competitive ratio. Note that online algorithms are robust since no assumptions are made about the actual system parameters or their statistics.

Online algorithms have been designed for many related problems in literature, 
such as the following: (a) Load balancing \cite{Azar1998LoadBalancing}, where given a set of servers the problem is to assign each incoming user to minimize the maximum load on any server, (b) Load balancing with deadlines \cite{Moharir2013}, 
where given a set of servers the problem is to assign each incoming user to minimize the maximum load on any server such that each job is completed within the declared deadline, 
(c) Maximum weight matching \cite{Karp1990Matching, Korula2009}, where the problem is to find  a bipartite matching, where the sum of the values of the edges in the matching are maximized, (d) Picking best subset of fixed cardinality \cite{LleinbergK-Secretary2005, babaioff2007k-Secretary} (called the $k$-secretary problem), 
where a fixed number of secretaries are going to arrive in a random order and the problem is to pick the $k$-best secretaries, and 
(e) Multi-partitioning \cite{Nemhauser1978, Chekuri2011MultiPartition}, where given a set $\bA$, the problem is to partition it into $k$ subsets $A_1, \dots, A_k, A_i \cap A_j = \phi, \cup_{i=1}^k A_i = \bA$ such that $\sum_{i=1}^k f_i(A_i)$ is maximized. 
Analytical results on the competitive ratio for the multi-partitioning problem is known only when the functions $f_i$ are sub-modular \cite{Nemhauser1978}.

The basestation association problem described above 
can be thought of as the multi-partitioning problem, 
where the set of users is partitioned into number of subsets equal to the number of basestations, such that the sum-rate is maximized. 
However, since time-sharing utility function  is not sub-modular, prior results on 
competitive ratio for multi-partitioning \cite{Nemhauser1978, Chekuri2011MultiPartition} are not applicable.

Our contributions are as follows:
\begin{itemize}
\item Under the worst case input, we show that for the case when each user has identical rates to all basestations, the competitive ratio is of the order $n/m$, where $n$ is the number of users and $m$ is the number of basestations. For the general case of each user having different rates to each basestation the competitive ratio further worsens to order $n$. The results reveal that online algorithms are too pessimistic in the worst case.

\item We next consider a more reasonable randomized input model, where the user rates to each basestation can still be worst case (adversarial); however, the order in which users arrive is randomized. 
We show that for the case when each user has identical rates to all basestations, the competitive ratio is close to $2$. For the general case of each user having different rates to each basestation, the competitive ratio is shown to be at most equal to $8$. This is a substantial improvement over the worst case input, since with the randomized input the competitive ratio is independent of the parameters of the problem. Moreover, these results show that not only online algorithms are robust, being independent of any change in system parameters and statistics, they incur only constant factor loss compared to offline algorithms that know everything about future arrivals in advance.
We also show that the often-used algorithm of attaching each user to the strongest basestation is far from optimal and its competitive ratio scales as $n$ even for the randomized input model.

\item We also consider the case of reassignment, where at any time slot one of the previously assigned user can be reassigned. We show a remarkable improvement in terms of competitive ratio performance. For the case when each user has identical rates to all basestations, the competitive ratio is shown to be equal to $1$, i.e. offline and online algorithm perform the same. 
For the general case of each user having different rates to each basestation, the competitive ratio is shown to be $2$, a four-fold increase over the no reassignment case. 
\end{itemize}

\section{Basestation Allocation Problem: Model and Preliminaries}
\label{sec:basest-alloc-probl}
Consider the scenario where $n$ mobile users arrive one at a time into a geographical area with $m$ basestations with $m<n$. User $i$ is said to have a non-negative weight $w_{ij}$ to basestation $j$, and this weight typically corresponds to the user's achievable rate/capacity. 
The weights $w_{ij}$ are collected into an $n\times m$ matrix $W$. The users, basestations, and the weights are conveniently captured in a weighted complete bipartite graph $G=(V_1\cup V_2,V_1\times V_2)$, where $V_1=\{1,2,\ldots,n\}$ is the set of users, $V_2=\{1,2,\ldots,m\}$ is the set of basestations, and the weight of the edge $(i,j)$ is $w_{ij}$. 

In the downlink basestation allocation problem, each user is to be allotted to exactly one of the $m$ basestations. An arbitrary allocation is, therefore, specified by the family of sets $\M=\{M_j: 1\le j\le m\}$, where $M_j$ denotes the set of users allotted to basestation $j$. Note that the sets $M_j$ partition $V_1$. We consider the allocation problem, where the goal is to find an allocation $\M$ that maximizes a certain objective function or utility denoted $R(\M,W)$, which is a function of the weight matrix as well as the allocation. The allocation made by an algorithm $\A$ for a weight matrix $W$ will be denoted $\M_\A(W)$. 

\subsection{Offline versus online allocation}
In an {\it offline} allocation problem, the entire weight matrix $W$ is available ahead of time. Therefore, an offline algorithm attempts to find the optimal allocation by solving the following problem:
\begin{equation}
  \label{eq:8}
  \M^*_\off(W)=\arg\max_{\M}R(\M,W).
\end{equation}
Let us denote the optimal offline utility by $R^*_\off=R(\M^*_\off(W),W)$.

In an {\it online} allocation problem, the users arrive in an arbitrary order, and weights of user $i$ are revealed only upon arrival. i.e. the $i$-th row of $W$ is revealed when user $i$ arrives. Most importantly, user $i$ needs to be allotted to a basestation upon arrival, and this allocation cannot be altered subsequently. An important distinction between online and offline algorithms is that an online algorithm's utility and allocation vary with a permutation of the rows of $W$, i.e. users order of arrival, while optimal offline algorithms are invariant to row permutations. So, in our study, a row permuted version of $W$ is considered distinct from $W$.

It is readily seen that, for a given weight matrix $W$, the utility of any online algorithm $\A$ will satisfy $R(\M_\A(W),W)\le R^*_\off$. Competitive ratio, denoted $\eta$, is a figure of merit used to characterize and compare online and offline algorithms \cite{BorodinOnlineBook}. For a given online algorithm $\A$ and weight matrix $W$, the competitive ratio is defined as
\begin{equation}
  \label{eq:12}
  \eta_W(\A) = \dfrac{R(\M^*_\off(W),W)}{R(\M_\A(W),W)}.
\end{equation}
The worst-case competitive ratio of a specific online algorithm $\A$ is defined as
\begin{equation}
  \label{eq:9}
  \eta_\worst(\A) = \max_W \eta_W(\A) = \max_W \dfrac{R(\M^*_\off(W),W)}{R(\M_\A(W),W)}.
\end{equation}
The average-case competitive ratio of $\A$ over a distribution on $W$ is defined as
\begin{equation}
  \label{eq:10}
  \eta_\avg(\A) = \sum_W \dfrac{R(\M^*_\off(W),W)}{R(\M_\A(W),W)}\pr(W),
\end{equation}
where $\pr(W)$ denotes the probability of occurance of the weight matrix $W$. The best possible worst-case competitive ratio that can be achieved by any online algoritm is denoted $\eta^*_\worst$ and is defined as
\begin{equation}
  \label{eq:11}
  \eta^*_\worst=\min_\A \eta_\worst(\A)=\min_\A \max_W \eta_W(\A).
\end{equation}
The following lower and upper bounds on $\eta^*_\worst$ are readily established:
\begin{equation}
  \label{eq:13}
  \min_\A \max_{W\in \W} \eta_W(\A) \le \eta^*_\worst \le \eta_\worst(\A_0),
\end{equation}
where $\W$ is a finite set of weight matrices and $\A_0$ is any online algorithm. 

Ideally, we would like to have online algorithms with a competitive ratio as close to 1 as possible. If that is not possible, a competitive ratio that is constant with increasing parameters of the problem is desirable. An undesirable scenario is when the competitive ratio increases without bound, in which case the online algorithm becomes a very poor replacement for the offline one.
\subsection{Utility functions}
Most common and popular utility function in wireless cellular networks is the time-sharing utility, where all users associated to any one basestation equally share its resources. Time sharing utility achieves sum-rate of 
\begin{equation}
  \label{eq:7}
  TS(\M,W)=\sum_{j=1}^m\left(\sum_{i\in M_j}\dfrac{w_{ij}}{d_j}\right),
\end{equation}
where $w_{ij}$ is the rate/capacity of the channel of user $i$ to basestation $j$, and $d_j=|M_j|$ is the degree of basestation $j$ (number of users allocated to basestation $j$) in the allocation $\M$, with each basestation equally time-sharing the allotted users. As shown later, the $1/d$-factor in the time-sharing utility plays a crucial role in determining the competitive ratio.

Note that since we are considering the downlink, achievable rate of any user $i$ associated to basestation $j$ does not depend on the users associated to basestation $k\ne j$.  Essentially, 
we are assuming that all the interfering basestations create identical interference profile at any user  irrespective of the number of users associated with them. Thus, at each user arrival, it is sufficient to know the corresponding user's achievable rates from different basestations (with time sharing), and not how it affects other already assigned users.



\section{Bounds on Worst Case Competitive Ratio}
\label{sec:bounds-worst-case}
For studying the worst case competitive ratio of online algorithms for basestation allocation, we begin by considering different scenarios that result in different forms for the weight matrix $W$. For each case, we derive lower and upper bounds on the worst-case competitive ratio of online algorithms. Unless mentioned otherwise, the utility function will be assumed to be the time-sharing utility of (\ref{eq:7}).

\subsection{Case I: Identical users}
In this case, each basestation has the same weight to all users. Hence, the matrix $W$ is such that all rows are repetitions of the first row, i.e. $w_{ik}=w_{jk}$, $1\le k\le m$ for all $1\le i,j\le n$. So, clearly, the order of arrival of the users is irrelevant, and the information available with the online algorithm is the exact same as that of the offline algorithm. When the first user arrives, the first row of $W$ is revealed to an online algorithm, and this reveals the entire matrix $W$. Therefore, a competitive ratio of 1 can be achieved in this case. In fact, a greedy algorithm that maximizes the utility upon the arrival of each user will be optimal for any utility function. 

\subsection{Case II: Identical basestations}
In this case, each user has the same weight to all basestations making the rows of $W$ constant. Hence,
the matrix $W$ is such that all columns are repetitions of the first column, i.e. $w_{ik}=w_{il}$, $1\le i\le n$ for all $1\le k,l\le m$. The order of arrival is now significant and the offline algorithm has significantly more information than an online algorithm in the worst case. We now study this case closely and show that $\eta^*_\worst$ scales as $n/m$ in the worst case.

\subsubsection{Optimal offline algorithm}\label{subsec:optoffline} We begin by deriving the optimal offline algorithm for the time-sharing utility. An offline algorithm has non-causal knowledge of the weight of user $i$, denoted $w_i$, which is the constant value in row $i$ of $W$.
\begin{prop}
  Let the users be numbered such that $w_1\ge w_2\ge\cdots\ge w_n$. Then, the optimal offline time-sharing utility is
\begin{equation}
  \label{eq:2}
  TS({\cal M}^*_\off(W), W)=w_1+w_2+\cdots+w_{m-1}+\dfrac{1}{n-(m-1)}(w_m+w_{m+1}+\cdots+w_n),
\end{equation}
and an optimal allocation is $M_j=\{j\}$, $1\le j\le m-1$, and $M_m=\{m,m+1,\ldots,n\}$, i.e. the top $m-1$ users are individually allotted to a basestation each, and the remaining $n-m+1$ users share one basestation.
\label{thm:optim-offl-algor}
\end{prop}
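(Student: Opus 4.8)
The plan is to reduce an arbitrary allocation to a very simple structural form by a sequence of utility-non-decreasing local moves, and then read off the optimum. Since all columns of $W$ coincide, the time-sharing utility of a partition $\M$ depends only on the partition and, by (\ref{eq:7}), equals $TS(\M,W)=\sum_{j=1}^m \frac{1}{|M_j|}\sum_{i\in M_j} w_i$, i.e. the sum over basestations of the \emph{average} weight of the users attached to them. Achievability of the claimed value is then immediate: substitute $M_j=\{j\}$ for $j\le m-1$ and $M_m=\{m,\dots,n\}$ into this formula, noting $n-(m-1)=n-m+1$. The real content is the matching upper bound $TS(\M,W)\le w_1+\cdots+w_{m-1}+\frac{1}{n-m+1}(w_m+\cdots+w_n)$ for every partition $\M$.

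First I would show some optimal allocation uses all $m$ basestations: if $M_j$ is empty then, since $n>m$, some basestation carries $\ge 2$ users, and moving the lightest of them onto the empty basestation does not decrease the utility, because removing a user of weight $w\le S/d$ from a group of weight $S$ and size $d$ raises that group's average from $S/d$ to $(S-w)/(d-1)$, while the moved user contributes $w\ge0$ on its own. Next, a one-for-one swap argument shows that in an optimal allocation every singleton user has weight at least that of every user sitting in a group of size $\ge2$: if $\{a\}$ is a singleton and $b$ lies in a group of size $d\ge2$ with $w_b>w_a$, swapping $a$ and $b$ changes the utility by $(w_b-w_a)(1-1/d)>0$, contradicting optimality.

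The crux is to show an optimal allocation has exactly one group of size $\ge2$. Here I would use a ``freeze the sizes'' observation: if a group of size $a$ and a group of size $b$ together hold a fixed set of users of total weight $S$, writing the smaller group's sum as $\sigma$ gives contribution $\sigma(\frac1a-\frac1b)+\frac{S}{b}$, which for $a\le b$ is maximized by loading the size-$a$ group with the heaviest $a$ of these weights. Feeding this back, a short estimate shows replacing sizes $(a,b)$ with $a\ge2$ by $(a-1,b+1)$ never decreases the utility: the first group's average rises since the sum of its top $a-1$ weights is $\ge(a-1)w_{(a)}$, and the second group's average does not fall since the $b$ weights it inherits are each $\le w_{(a)}$, so their sum is $\le b\,w_{(a)}$. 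Iterating down to $a=1$, the union of any two non-singleton groups can be re-split into a singleton plus one larger group without loss and keeping the group count equal to $m$; choosing an optimal allocation with the fewest non-singleton groups then forces that number to be $\le1$, hence (with the previous paragraph, and $n>m$ ruling out all-singletons) exactly $1$. Finally, the swap argument identifies the $m-1$ singletons as users $1,\dots,m-1$ and the remaining group as $\{m,\dots,n\}$, yielding the stated value.

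I expect the ``exactly one non-singleton group'' step to be the main obstacle: the two earlier reductions are routine, but ruling out several medium-sized groups needs the right decreasing potential — here the number of non-singleton groups, together with the monotonicity $V(a-1)\ge V(a)$ of the optimal-over-weights value under the size exchange. An alternative framing is that, after the rearrangement step, the optimal value as a function of the size vector $(|M_1|,\dots,|M_m|)$ is Schur-convex on $\{d_j\ge1,\ \sum_j d_j=n\}$, hence maximized at the extreme point $(1,\dots,1,n-m+1)$; I would fall back on this if the elementary exchange bookkeeping becomes unwieldy.
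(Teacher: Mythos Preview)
Your proposal is correct and follows the same high-level strategy as the paper---reduce an arbitrary allocation to the ``$m-1$ singletons plus one big group'' form by utility-non-decreasing two-group moves---but the execution of the key two-group step differs. The paper isolates a single algebraic inequality (its Lemma~1): for any split of $d$ users between two basestations, putting the heaviest user alone and the other $d-1$ together dominates every other split; it then chains this over consecutive pairs of basestations. You instead prove the finer monotonicity $(a,b)\mapsto(a-1,b+1)$ under optimal loading by two clean averaging observations (dropping the smallest from the top-$a$ group raises its mean; adding $v_{(a)}$ to the bottom-$b$ group cannot lower its mean), and iterate down to $(1,a+b-1)$; this is effectively an alternative, more elementary proof of Lemma~1. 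Your preliminary ``no empty basestation'' and ``swap heavy into singleton'' steps are things the paper leaves implicit in its chaining (after its iterations one still has to note the $m-1$ resulting singletons can be taken to be the globally heaviest users---exactly your swap argument). The Schur-convexity remark is a nice bonus: it names the structural reason the extreme size vector $(1,\dots,1,n-m+1)$ wins, which neither the paper's inequality nor your exchange argument makes explicit.
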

\begin{proof}
To prove (\ref{eq:2}), we need the following lemma.
\begin{lemma}
  Let $w_1\ge w_2\ge\cdots\ge w_d$. Then, for any $k$-element subset $S$ of $\{1,2,\ldots,d\}$ with $1\le k\le d-1$, we have
  \begin{equation}
    \label{eq:3}
    w_1+\dfrac{1}{d-1}(w_2+w_3+\cdots+w_d)\ge \dfrac{1}{k+1}(w_1+\sum_{i\in S}w_i)+\dfrac{1}{d-k-1}\sum_{i\notin S\cup \{1\}}w_i.
  \end{equation}
\label{lem:1}
\end{lemma}
\begin{proof}
Simplifying (\ref{eq:3}), we need
$$\dfrac{k}{k+1}w_1\ge \left(\dfrac{1}{k+1}-\dfrac{1}{d-1}\right)\sum_{i\in S}w_i+\left(\dfrac{1}{d-k-1}-\dfrac{1}{d-1}\right)\sum_{i\notin S\cup \{1\}}w_i,$$
which reduces to the requirement
$$k(d-1)(d-k-1)w_1\ge (d-k-1)(d-k-2)\sum_{i\in S}w_i+k(k+1)\sum_{i\notin S\cup \{1\}}w_i.$$
Now, using $\sum_{i\in S}w_i\le kw_1$ and $\sum_{i\notin S\cup \{1\}}w_i\le (d-k-1)w_1$, we see that the above requirement is satisfied.
\end{proof}
In words, Lemma \ref{lem:1} says that, for $m=2$ basestations and $n=d$ users, the maximum throughput is obtained by allotting the user with the largest weight to one basestation, and letting all other users share the other basestation. 

We now consider $m$ basestations and $n$ users. To prove Proposition \ref{thm:optim-offl-algor}, we show that for an arbitrary allocation $\M$,  $TS(\M,W)\le TS({\cal M}^*_\off(W), W)$ by repeated application of Lemma \ref{lem:1}. Let $TS(\M,W)$ allocate $k_j$ users to basestation $j$, such that $\sum_{j=1}^m k_j=n$. Then 

\begin{eqnarray*}
TS(\M,W) &=& \sum_{j=1}^m \frac{1}{k_j} \sum_{i \in M_j} w_i, \\
&=& \frac{1}{k_1} \sum_{i \in M_1} w_i + \frac{1}{k_2} \sum_{i \in M_2} w_i + \sum_{j=3}^m \frac{1}{k_j} \sum_{i \in M_j} w_i\\
\end{eqnarray*}
Without loss of generality assume that $w'_1 \ge w'_2$, where $w'_j=\max_{i\in M_j}w_i$.
Then applying Lemma \ref{lem:1} for basestation $1$ and $2$ by moving all users from basestation $1$ to $2$ except the one with weight $w'_1$, we have 
\begin{eqnarray*}
TS(\M,W) &\le &  w'_1 + \frac{1}{k_2+k_1-1} \sum_{i \in M_2, |M_2| = k_2+k_1-1} w_i + \sum_{j=3}^m \frac{1}{k_j} \sum_{i \in M_j} w_i.
\end{eqnarray*}
Repeating this argument successively for basestation $i$ and $i+1$ for $i=2,\dots,n-1$, we get the claim that $TS(\M,W)\le TS({\cal M}^*_\off(W), W)$.

\end{proof}
\subsubsection{Upper bound on the competitive ratio}
We consider a round robin online algorithm $RR$ that allots user $i$ to basestation $1+(i\mod m)$. After $n$ users have arrived, each basestation has either $\floor*{\frac{n}{m}}$ or $\ceil*{\frac{n}{m}}$ users associated to it. $RR$ algorithm achieves a time-sharing utility
\begin{equation}
  \label{eq:5}
  TS(\M_{RR}(W),W)\ge \dfrac{w_1+w_2+\cdots+w_m}{\ceil*{\frac{n}{m}}},
\end{equation}
where, once again, the weights are such that $w_1\ge w_2\ge\cdots\ge w_n$. Since 
$TS^*_{\off}\le w_1+w_2+\cdots+w_m$,
we see that the round robin online algorithm has competitive ratio
$\eta_\worst(RR)\le \ceil*{\frac{n}{m}}$.
Setting $\A_0=RR$ in (\ref{eq:13}), we see that 
\begin{equation}
  \label{eq:14}
\eta^*_\worst \le \ceil*{\frac{n}{m}}.  
\end{equation}
\subsubsection{Lower bound on the competitive ratio}\label{subsec:LBId}
To lower bound the competitive ratio, we use (\ref{eq:13}) with a suitable set of weight matrices $\W$. Denoting the constant value in row $i$ of $W$ as $w_i$, a weight matrix $W$ is specified by the vector $[w_1,w_2,\ldots,w_n]$. We let
\begin{equation}
  \label{eq:15}
\W=\{W_l=[\beta,\beta^2,\ldots,\beta^l,0,\ldots,0]:\ 1\le l\le n\},
\end{equation}
which contains $n$ weight matrices with $\beta>1$ being a constant. From Proposition \ref{thm:optim-offl-algor}, by allocating only the $l^{th}$ user in $W_l$ to one of the basestations, we have 
\begin{equation}
  \label{eq:1}
  TS(\M^*_\off(W_l),W_l)\ge \beta^l,\ 1\le l\le n.
\end{equation}
Now, let us consider an arbitrary online algorithm $\A$. Suppose that, for $W=W_n$, $\A$ allocates user $i$ to basestation $k_i$, and let $d_{\max}(i)$ be the maximum degree of any basestation immediately after user $i$ has been allocated. Since there are $n$ users and $m$ basestations, there exists $i=i^*$ such that $d_{\max}(i^*)\ge n/m$, by a pigeon-hole argument. Now, since $\A$ is online and $W_l$ matches with $W_n$ in the first $l$ rows, for $W=W_l$, $\A$ allocates user $i$ to basestation $k_i$ for $1\le i\le l$. Specifically, for input $W=W_{i^*}$, $\A$ allocates user $i^*$ to a basestation with degree at least $n/m$. Therefore, 
\begin{equation}
  \label{eq:16}
  TS(\M_A(W_{i^*}),W_{i^*})\le \dfrac{\beta^{i^*}}{(n/m)}+m\beta^{i^*-1},
\end{equation}
since the contribution of each basestation is at most $\beta^{i^*-1}$ for all users other than user $i^*$. Using (\ref{eq:1}) and (\ref{eq:16}), we get
\begin{equation}
  \label{eq:17}
  \eta_{W_{i^*}}(\A)= \dfrac{TS(\M^*_\off(W_{i^*}),W_{i^*})}{TS(\M_A(W_{i^*}),W_{i^*})}\to\dfrac{n}{m}, \text{ as }\beta\to\infty.
\end{equation}
Therefore, for every online algorithm $\A$, $\max_{W\in\W}\eta_W(\A)\to \dfrac{n}{m}$, and
\begin{equation}
  \label{eq:18}
  \eta^*_\worst\ge \min_\A\max_{W\in\W}\eta_W(\A)\to \dfrac{n}{m}.
\end{equation}
Comparing with (\ref{eq:14}) and (\ref{eq:18}), we get the following:
\begin{theorem}
When all $m$ basestations are identical for each of the $n$ users, the best worst-case competitive ratio scales as $n/m$ and the simple round-robin algorithm achieves it.  
\end{theorem}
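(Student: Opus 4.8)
The plan is to combine two bounds that together pin down $\eta^*_\worst$ to within a constant factor of $n/m$, invoking the general sandwich (\ref{eq:13}): a lower bound obtained by restricting the adversary to a small, well-chosen family $\W$, and an upper bound obtained by exhibiting one concrete online algorithm.

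For the \textbf{upper bound} I would analyze the round-robin rule $RR$ and show $\eta_\worst(RR)\le\ceil*{\frac{n}{m}}$. The two ingredients are: (i) after all $n$ arrivals every basestation holds at most $\ceil*{\frac{n}{m}}$ users, so discarding all but the largest weight on each basestation gives (\ref{eq:5}), $TS(\M_{RR}(W),W)\ge\frac{1}{\ceil*{n/m}}\sum_{k=1}^m w_k$ with $w_1\ge\cdots\ge w_n$; and (ii) Proposition \ref{thm:optim-offl-algor}, which gives $TS^*_\off=w_1+\cdots+w_{m-1}+\frac{1}{n-m+1}(w_m+\cdots+w_n)\le w_1+\cdots+w_m$ because the trailing average is at most $w_m$. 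Dividing yields $\eta_\worst(RR)\le\ceil*{\frac{n}{m}}$, and taking $\A_0=RR$ in (\ref{eq:13}) gives $\eta^*_\worst\le\ceil*{\frac{n}{m}}$.

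For the \textbf{lower bound} I would apply (\ref{eq:13}) with the finite geometric family $\W=\{W_l=[\beta,\beta^2,\ldots,\beta^l,0,\ldots,0]:1\le l\le n\}$ for a constant $\beta>1$. Fix an arbitrary online algorithm $\A$ and first run it on $W_n$; by a pigeon-hole argument over $n$ users and $m$ basestations, some user $i^*$ is placed onto a basestation whose degree, immediately after that placement, is at least $n/m$. Because an online algorithm's decisions depend only on the revealed prefix, $\A$ makes the identical first $i^*$ choices on $W_{i^*}$, so on input $W_{i^*}$ the heaviest user (weight $\beta^{i^*}$) is trapped on a basestation of degree $\ge n/m$; hence $TS(\M_\A(W_{i^*}),W_{i^*})\le\frac{\beta^{i^*}}{n/m}+m\beta^{i^*-1}$, while isolating user $i^*$ gives $TS(\M^*_\off(W_{i^*}),W_{i^*})\ge\beta^{i^*}$. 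Therefore $\eta_{W_{i^*}}(\A)\ge\bigl(\frac{m}{n}+\frac{m}{\beta}\bigr)^{-1}\to\frac{n}{m}$ as $\beta\to\infty$, so $\max_{W\in\W}\eta_W(\A)$ exceeds any number below $n/m$, and since $\A$ was arbitrary, $\eta^*_\worst\ge\min_\A\max_{W\in\W}\eta_W(\A)\to\frac{n}{m}$.

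Putting the two together, $\frac{n}{m}\lesssim\eta^*_\worst\le\ceil*{\frac{n}{m}}<\frac{n}{m}+1$, which (for $n\ge m$, so $n/m\ge1$) is $\Theta(n/m)$, and $RR$ attains this order. I expect the only genuinely delicate point to be the lower bound: one must use \emph{both} that online choices are a function of the revealed prefix only — so that the bad user $i^*$ identified on $W_n$ is still bad on the truncated input $W_{i^*}$ — \emph{and} the pigeon-hole step guaranteeing such an $i^*$ exists. Everything else, including the upper-bound estimate, is routine once Proposition \ref{thm:optim-offl-algor} is in hand.
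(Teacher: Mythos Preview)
Your proposal is correct and follows essentially the same approach as the paper: the upper bound via round-robin together with Proposition~\ref{thm:optim-offl-algor}, and the lower bound via the geometric family $\W=\{W_l\}$ combined with the pigeon-hole and prefix-consistency arguments, are exactly what the paper does. Your identification of the delicate point (prefix-consistency plus pigeon-hole) matches the paper's development, and your statement that user $i^*$ lands on a basestation of degree $\ge n/m$ is in fact slightly more explicit than the paper's wording, which leaves implicit that one should take the \emph{first} such $i^*$.
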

\subsection{Case III: Arbitrary Weights}\label{subsec:GeneralWorstCase}
In this case, the weight matrix $W$ is assumed to contain arbitrary entries with no further assumptions. 
\subsubsection{Optimal offline algorithm} We first note that finding the optimal offline algorithm is challenging in this case. For deriving the competitive ratio, instead we upper bound the utility of the optimal offline algorithm.
Consider an algorithm $\A$ that makes an allocation $\M_\A(W)=\{M_j:\ 1\le j\le m\}$ on a weight matrix $W$. Clearly,
$TS(\M_\A(W),W)\le\sum_{j=1}^m\max_{i\in M_j}w_{ij}=\sum_{j=1}^mw_{i^*(j),j}$,
where $i^*(j)=\arg\max_{i\in M_j}w_{ij}$. The set of edges $\{(i^*(1),1),(i^*(2),2),\ldots,(i^*(m),m)\}$ is a matching in $G=(V_1\cup V_2,V_1\times V_2)$, the weighted complete bipartite graph of the allocation problem with $V_1=\{1,2,\ldots,n\}$ and $V_2=\{1,2,\ldots,m\}$ and weight of edge $(i,j)$ set to $w_{ij}$. So, if $\MWM(G)$ denotes maximum weight over all matchings in $G$, we have
\begin{equation}
  \label{eq:20}
  TS^*_\off\le TS(\M_\A(W),W)\le\sum_{j=1}^mw_{i^*(j),j}\le \MWM(G).
\end{equation}

\subsubsection{Bounds on the competitive ratio}
For an upper bound on the best worst-case competitive ratio, we use \eqref{eq:13} with a max-weight  online algorithm $\A_0$ that allocates each user to the basestation with maximum weight, i.e. user $i$ is alloted to basestation $j^*=\arg\max_jw_{ij}$. Let $\mw_i=w_{i,j^*}$ be the maximum weight of user $i$, and arrange the users such that $\mw_1\ge \mw_2\ge\cdots\ge \mw_n$. Now, 
$TS^*_\off\le \sum_{i=1}^m\mw_i$,
and 
$TS(\M_{\A_0}(W),W)\ge \dfrac{\sum_{i=1}^m\mw_i}{n}$,
since the maximum degree of any basestation in any allocation is $n$. Therefore,
\begin{equation}
  \label{eq:23}
  \eta^*_\worst\le\eta_\worst(\A_0)\le n.
\end{equation}
While the above upper bound looks rather pessimistic at first glance, it is, in fact, tight for the general case of arbitrary weights.  We have the following Theorem 
\begin{theorem}
\label{thm:worstcasegeneral}
For allocating $n$ users to $m$ basestations, the best worst-case competitive ratio scales as $n$ and an online algorithm that associates each user to the basestation with the largest weight achieves it.  
\end{theorem}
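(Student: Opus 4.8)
The upper bound $\eta^*_\worst\le n$ is already in hand from (\ref{eq:23}), and that paragraph also shows that the max-weight algorithm $\A_0$ attains it up to a constant; so the only thing left to prove is a matching lower bound showing that $\eta^*_\worst$ grows like $n$. The plan is to invoke the generic bound (\ref{eq:13}) with a finite family $\W$ of weight matrices fashioned after the Case~II lower-bound construction, but now using the extra freedom of arbitrary weights to trap every online algorithm into a basestation of degree $\Theta(n)$ instead of merely $\Theta(n/m)$.

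Concretely, I would fix a large constant $\beta>1$ and take $\W=\{W_1,\dots,W_n\}$, where in $W_l$ the first $l$ users have weight $\beta^i$ to basestation $1$ and weight $1$ to every other basestation, while users $l+1,\dots,n$ have weight $0$ to all basestations; thus $W_l$ and $W_{l'}$ coincide on their first $\min(l,l')$ rows. Two elementary observations set up the argument. First, $TS(\M^*_\off(W_l),W_l)\ge\beta^l$ for every $l$: the offline algorithm can place user $l$ alone on basestation $1$ and park the remaining (lighter or zero-weight) users on basestation $2$, which requires $m\ge2$. Second, an online algorithm that assigns every one of users $1,\dots,l$ to basestation $1$ obtains utility exactly $\frac1l\sum_{i=1}^l\beta^i$ on $W_l$; in particular on $W_n$ its utility is $\frac1n\sum_{i=1}^n\beta^i=\frac{\beta(\beta^n-1)}{n(\beta-1)}$.

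The core of the proof is a dichotomy obtained by running an arbitrary online algorithm $\A$ on the longest instance $W_n$. Either (i) $\A$ assigns every user to basestation $1$, in which case the two observations above give $\eta_{W_n}(\A)\ge\frac{n(\beta-1)\beta^n}{\beta(\beta^n-1)}$, which tends to $n$ as $\beta\to\infty$; or (ii) there is a smallest index $i^*$ with $\A$ assigning user $i^*$ to a basestation other than $1$. Because $W_{i^*}$ agrees with $W_n$ on its first $i^*$ rows, $\A$ makes the identical assignments on $W_{i^*}$, so on $W_{i^*}$ basestation $1$ holds only the users $1,\dots,i^*-1$ and contributes less than $\sum_{i=1}^{i^*-1}\beta^i<2\beta^{i^*-1}$, while no other basestation carries any weight exceeding $1$; hence $\A$'s utility on $W_{i^*}$ is below $2\beta^{i^*-1}+1<3\beta^{i^*-1}$, and the first observation gives $\eta_{W_{i^*}}(\A)\ge\frac{\beta^{i^*}}{3\beta^{i^*-1}}=\frac{\beta}{3}$. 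Choosing $\beta\ge3n$ makes the case-(ii) ratio at least as large as the case-(i) ratio, so for every $\A$ we obtain $\max_{W\in\W}\eta_W(\A)\ge\frac{n(\beta-1)\beta^n}{\beta(\beta^n-1)}$; feeding this into (\ref{eq:13}) and letting $\beta\to\infty$ yields $\eta^*_\worst\ge n$, which together with (\ref{eq:23}) proves the claim (and also certifies that $\A_0$ is order-optimal).

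The part I expect to require the most care is turning the ``forced into degree $\Theta(n)$'' intuition into the rigorous dichotomy above — in particular, seeing why an online algorithm has no good option. The lever is the familiar online device of stopping the instance at the exact moment the algorithm first ``cashes out'': since the weights to basestation $1$ increase geometrically, deviating early (devoting a light user to an under-loaded basestation) is disastrous because the offline optimum simply isolates the single heavy user and beats the online utility by a factor of order $\beta$, whereas refusing to deviate means continually stacking users onto one basestation, which is disastrous by a factor of order $n$. A handful of edge cases — $i^*=1$ (where basestation $1$ is empty on $W_1$), the zero-padding used to keep all matrices of size $n\times m$, and the degenerate $m=1$ for which ``order $n$'' should be read as a statement about $m\ge2$ — have to be checked, but each is immediate.
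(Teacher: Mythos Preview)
Your proposal is correct and follows essentially the same strategy as the paper: a family $\W=\{W_1,\dots,W_n\}$ with geometrically growing weights $\beta,\beta^2,\dots$ in column~1 and negligible weights elsewhere, combined with the dichotomy ``either the algorithm puts every user on basestation~1 (ratio $\to n$), or it deviates at some first index $i^*$, whereupon stopping at $W_{i^*}$ makes the ratio $\Theta(\beta)$.'' The paper uses $\e$ for the off-column weights and the trailing rows, whereas you use $1$ and $0$ respectively, which is a cosmetic simplification (you need only the single limit $\beta\to\infty$ rather than both $\beta\to\infty$ and $\e\to0$); otherwise the arguments are the same.
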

\vspace{-0.2in}
\begin{proof}
To prove the Theorem, we use \eqref{eq:13} with the set of weight matrices
$$\W=\{W_l=\begin{bmatrix}
\beta    &\e&\e&\cdots&\e\\
\beta^2&\e&\e&\cdots&\e\\
\vdots &\vdots&\vdots&\cdots&\vdots\\
\beta^l&\e&\e&\cdots&\e\\
\e        &\e&\e&\cdots&\e\\
\vdots &\vdots&\vdots&\cdots&\vdots\\
\e        &\e&\e&\cdots&\e
\end{bmatrix},\ 1\le l\le n\}$$
with $\beta>1$ and $\e<1$. For the optimal offline algorithm, by assigning the $l^{th}$ user to basestation $1$ and all other users to other basestations,
$TS(\M^*_\off(W_l),W_l)\ge \beta^l, 1\le l \le n$.
Now, consider an arbitrary online algorithm $\A$. We first show that if for any $W=W_j$, if a user is assigned by any online algorithm $\A$ to a basestation with weight $\e$, then $\eta_\worst(\A)\ge \infty$. Suppose user $l$ is assigned to a basestation with weight $\e$ for $W=W_j, l \le j$. Since $\A$ is online and $W_l$ matches with $W_j$ in the first $l$ rows, we have that $\A$ allocates the same basestation to user $l$ for $W=W_l$. So, we have
$$TS(\M_\A(W_l),W_l)\le \e+m\beta^{l-1}.$$
Therefore,
$$\eta_\worst(\A)\ge \dfrac{\beta^l}{\e+m\beta^{l-1}},$$
which can be made arbitrarily large by choosing a sufficiently large $\beta$ and  sufficiently small $\e$.

Therefore the only other choice for any online algorithm is to asign all users to basestation 1 (avoiding the allocation of any user to a basestation with weight $\e$), for which 
$$TS(\M_\A(W_n),W_n)\le \dfrac{\beta^n}{n}+\beta^{n-1}$$
resulting in
\begin{equation}
  \label{eq:24}
  \eta_\worst(\A)\ge \dfrac{\beta^n}{\beta^n/n+\beta^{n-1}}\to n,\text{ as }\beta\to\infty.
\end{equation}
Therefore we get
\begin{equation}
  \label{eq:25}
  \eta^*_\worst\ge \min_\A\max_{W\in\W}\eta_\worst(\A)\to n.
\end{equation}
\end{proof}

In summary, the worst case competitive ratio of online basestation allocation problems grow unbounded in the number of  users. Therefore, online algorithms for basestation allocation are not very promising in the worst case, and simple algorithms such as round-robin or max-weight allocation suffice. In practice, assuming  worst case for both the weights of each user and the order of user arrivals is somewhat pessimistic since there is inherent randomness in user arrivals. 
We study this 
more reasonable scenario in the next section,  where users arrive uniformly randomly, however, with arbitrary weights (including worst case).
We show that online algorithms have constant competitive ratio in the average case (uniformly random user arrivals), which has wide ranging implications for practical basestation allocation problems.

\section{Bounds on Average Case Competitive Ratio}
\label{sec:bounds-average-case}
In this section, we assume that weights of each user to any basestation are arbitrary (possibly worst case), while the order of user arrivals is uniformly random. Note that we are not making any assumptions on the statistics of any user's achievable rates.

\subsection{Case I: Identical basestations}
For average-case bounds on the competitive ratio, we first consider the
identical basestations case, where each user has equal weights to all
basestations. In this case, the weight matrix has a
constant value in each row, and the entire matrix is specified by the vector of weights in each row.
Let $\w=[w_1,w_2,\ldots,w_n]$ be the set of weights ordered such that $w_1\ge w_2\ge\cdots\ge w_n$. For considering average competitive ratio, we need to specify a distribution for the weight matrices. Assuming user arrivals to be uniformly random, the weight matrix $W$ is assumed to be
uniformly distributed over all permutations of
$\w=[w_1,w_2,\ldots,w_n]$, i.e. for any permutation $\pi$ of
$\{1,2,\ldots,n\}$, Pr$\{W=\pi(\w)\}=1/n!$, where $\pi(\w)=[w_{\pi(1)}\ w_{\pi(2)}\cdots
w_{\pi(n)}]$. Our bounds for the case of averaging over the
permutations with fixed weights will be
independent of $w_i$, and, therefore, the same bounds hold for any further averaging over the values of
the weights $w_i$.

\subsubsection{Two basestations}
When $m=2$ and $W=\pi(\w)$, there are two basestations and the optimal offline
algorithm allocates the maximum-weight user, user $\pi^{-1}(1)$, to basestation 2 (say),
while all the other users are allocated to basestation 1. To be
competitive, any online algorithm has the task of identifying the
maximum-weight user in an online fashion. Identifying the maximum
weight with high probability over random permutions of a sequence is
well-known as the secretary problem
\cite{Dynkin1963,Gilbert1966,Freeman1983}. Algorithms for the
secretary problem use the stopping rule method, which we adapt for
online allocation as follows.

Consider an online algorithm $\A(r)$ that works as follows:
\begin{enumerate}
\item Allocate the first $r$ users to basestation 1 and compute $T=\max\{w_{\pi(i)}:1\le i\le
r\}$. We will refer to $r$ as the number of test users.
\item The first subsequent user $i > r$ with weight $w_{\pi(i)}\ge T$
is allocated to basestation 2. All other subsequent users $j > i$ are allocated to basestation 1. 
\item If basestation 2 has no
user when $i=n$, the last user, user $n$, is allocated to
basestation 2. Otherwise, user $n$ is also allocated to basestation 1.
\end{enumerate}
Let $i^*$ denote the position of the user allocated to basestation
2. The online algorithm $\A(r)$ achieves a competitive ratio of $1$,
whenever the maximum weight user is allocated to basestation 2, or $\pi(i^*)=1$. Therefore,
\begin{equation}
  \label{eq:26}
  \eta_{\avg}(\A(r))\ge \pr(\pi(i^*)=1),
\end{equation}
where the probability is over a uniform choice of the permutation
$\pi$. Now, the maximum-weight user is allocated to basestation 2,
whenever that user arrives at position $i$ for $i>r$, and the maximum
weight among the previous $i-1$ users occurs within the first $r$ positions. So,
\begin{equation}
  \label{eq:27}
  \pr(\pi(i^*)=1)=\sum_{i=r+1}^n\pr(\pi(i)=1)\pr(\arg\max_{1\le j\le i}w_{\pi(j)}\le r\ |\ \pi(i)=1).
\end{equation}
Since $\pi$ is uniformly chosen, we have that $\pr(\pi(i)=1)=1/n$ for
all $i$, and 
$$\pr(\arg\max_{1\le j\le i}w_{\pi(j)}\le r\ |\ \pi(i)=1)=r/(i-1)$$ 
for a fixed $i$. So,
  \eqref{eq:27} simplifies as
  \begin{eqnarray}
    \label{eq:28}
    \pr(\pi(i^*)=1)&=&\sum_{i=r+1}^n\frac{1}{n}\frac{r}{i-1}\\
\label{eq:29}&=&\frac{r}{n}\sum_{i=r+1}^n\frac{1}{i-1}>\frac{r}{n}\log_e\frac{n}{r},
  \end{eqnarray}
where we have used the inequality
\begin{equation}
  \label{eq:34}
  \sum_{i=a}^b\frac{1}{i}>\int_a^{b+1}\frac{1}{x}dx=\log_e\frac{b+1}{a}
\end{equation}
for positive integers $a$ and $b$. Maximizing over the choice of $r$ in (\ref{eq:29}), the optimum $r$ is found to be
$n/e$, and using in \eqref{eq:26}, we get that $  \eta_\avg(\A(n/e))> \frac{1}{e}$.
So, we see that a constant average case competitive ratio is achievable by online algorithms as $n\to\infty$, while the worst case grows unbounded. The algorithm $\A(r)$ can be improved to achieve an even higher competitive ratio as described next. 
\subsubsection{Modified online algorithm for two basestations}
Consider an online algorithm $\tA(r)$ with $r$ test users that works as follows:
\begin{enumerate}
\item Allocate the first $r$ test users to basestation 1 and compute $T=\max\{w_{\pi(i)}:1\le i\le
r\}$. 
\item For $i=r+1,r+2,\ldots,n$, if $w_{\pi(i)}> T$,
  \begin{enumerate}
  \item Allocate user $i$ to basestation 2.
  \item Update $T=w_{\pi(i)}$. 
  \end{enumerate}
\item For $i=r+1,r+2,\ldots,n$, if $w_{\pi(i)}\le T$, allocate user $i$ to basestation 1. 
\end{enumerate}
Clearly, $\tA(r)$ will have an average competitive ratio strictly
greater than that of $\A(r)$. We now do an exact characterization of
this improvement.

Let $D$ denote the degree of basestation 2 after the running of algorithm $\tA(r)$. If the best user (user with maximum weight) arrives within the first $r$ positions, we will have $D=0$. For $D=1$, we require that the best user should arrive in position $i$ for some $r+1\le i\le n$ and, for this $i$, the best among the first $i-1$ users should arrive within the first $r$ positions. So, the event $D=1$ can be written as the disjoint union
$$\bigsqcup_{i=r+1}^n(\pi(i)=1)\cap(\arg\max_{1\le j\le i-1}w_{\pi(j)}\le r).$$
We now extend this argument to $D>1$. Let $D_i$ denote the degree of basestation 2 after user $i$ has been allocated. In particular, $D=D_n$, and the event $D_n=d$ satisfies the recursive relationship
\begin{equation}
  \label{eq:33}
  (D_n=d)\Leftrightarrow\bigsqcup_{i=r+d}^n(\pi(i)=1)\cap(D_{i-1}=d-1).
\end{equation}
\begin{prop}
  For the algorithm $\tA(r)$, the degree $D_n$ of basestation 2 after the arrival of $n$ users satisfies
  \begin{equation}
    \label{eq:35}
    \pr(D_n=d)\to\frac{1}{d!}\ \frac{r}{n}\left(\log_e\frac{n}{r}\right)^d,
  \end{equation}
for $r=\Theta(n)$, $d=o(n)$, as $n\to\infty$.
\label{prop:modif-online-algor}
\end{prop}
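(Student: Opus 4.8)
The plan is to recognise $D_n$ as a record count in a uniform random permutation, write its law in closed form, and then read off the Poisson-type asymptotics in \eqref{eq:35}.

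First I would reduce to counting records. In $\tA(r)$ a user $i>r$ is moved to basestation~2 exactly when $w_{\pi(i)}$ exceeds the running maximum, i.e.\ when position $i$ is a left-to-right maximum of $w_{\pi(1)},\dots,w_{\pi(i)}$. Assuming the weights are distinct (a generic situation; since the update in $\tA(r)$ uses a strict inequality, ties only decrease $D_n$, and the borderline case is removed by an arbitrarily small perturbation of the weights), $D_n$ is therefore the number of such records among positions $r+1,\dots,n$. Letting $B_i$ be the indicator that position $i$ is a record, the standard fact that the relative order of the first $i$ symbols of a uniform permutation is again uniform shows that the $B_i$ are independent with $\pr(B_i=1)=1/i$; this is precisely the content of \eqref{eq:33} once one conditions on the arrival position of the maximum-weight user. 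Hence $D_n\stackrel{d}{=}\sum_{i=r+1}^{n}B_i$.

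Next I would obtain the exact law. By independence the probability generating function is
$$g_n(s)=\prod_{i=r+1}^{n}\Bigl(1-\tfrac1i+\tfrac si\Bigr)=\prod_{i=r+1}^{n}\frac{i-1+s}{i}=\frac{\Gamma(n+s)\,\Gamma(r+1)}{\Gamma(r+s)\,\Gamma(n+1)}.$$
Expanding the product instead, and using the telescoping identity $\prod_{i=r+1}^{n}(1-1/i)=r/n$, gives the equivalent closed form
$$\pr(D_n=d)=\frac{r}{n}\,e_d\!\Bigl(\tfrac1r,\tfrac1{r+1},\dots,\tfrac1{n-1}\Bigr),$$
where $e_d$ is the $d$-th elementary symmetric polynomial; the prefactor $r/n$ is the eventual $e^{-\lambda}$ with $\lambda=\log_e(n/r)$. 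For the asymptotics, put $p_1=\sum_{j=r}^{n-1}1/j$ and $p_k=\sum_{j=r}^{n-1}1/j^{k}$. Since $r=\Theta(n)$ one has $p_1=\log_e(n/r)+O(1/n)$ while $p_k=O(1/n)$ for every $k\ge2$. Maclaurin's inequality gives $e_d\le p_1^{\,d}/d!$, and a one-line union bound on the non-injective terms of $\bigl(\sum_j 1/j\bigr)^d$ gives the matching lower bound $d!\,e_d\ge p_1^{\,d}-\binom d2 p_2\,p_1^{\,d-2}$, so that $e_d=\frac{p_1^{\,d}}{d!}\bigl(1-O(d^2/n)\bigr)$. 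Combining this with $p_1^{\,d}=(\log_e(n/r))^d(1+o(1))$ yields $\pr(D_n=d)\to\frac1{d!}\,\frac rn\,(\log_e\frac nr)^d$. Alternatively one may argue entirely from $g_n(s)$, using $\Gamma(n+s)/\Gamma(n+1)=n^{s-1}(1+O(1/n))$ and $\Gamma(r+1)/\Gamma(r+s)=r^{1-s}(1+O(1/n))$ uniformly on bounded $s$-sets to get $g_n(s)=\frac rn (n/r)^{s}(1+o(1))$ and then extracting the coefficient of $s^d$.

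The main obstacle is making this estimate uniform. For a fixed $d$ the last step is immediate, but to cover the whole stated range the error $O(d^2/n)$ in $e_d\approx p_1^{\,d}/d!$ (equivalently, the coefficient-extraction error in $g_n$) has to be pushed down, and the leverage for doing so is exactly the hypothesis $r=\Theta(n)$, which forces all higher power sums $p_k$ ($k\ge2$) — equivalently the deviation of $D_n$ from a Poisson$(\log_e(n/r))$ variable, controlled e.g.\ by Le Cam's bound $\sum_{i>r}(1/i)^2=p_2=O(1/n)$ — to be negligible; turning that smallness into a uniform multiplicative estimate on $\pr(D_n=d)$ (via a saddle-point estimate on $g_n$, or by exploiting the sign cancellation among the Newton corrections to $e_d$) is where the real work lies.
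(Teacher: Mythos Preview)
Your argument is correct and takes a genuinely different route from the paper. The paper proceeds by induction on $d$: it conditions on the position $i$ of the overall maximum-weight user, invokes the recursion \eqref{eq:33} to write $\pr(D_n=d)=\sum_i \frac{1}{n}\,\pr(D_{i-1}=d-1)$, plugs in the induction hypothesis for $\pr(D_{i-1}=d-1)$, and then replaces the resulting sum $\sum_i \frac{1}{i-1}(\log\frac{i-1}{r})^{d-1}$ by the integral $\frac{1}{d}(\log\frac{n}{r})^d$. No error terms are tracked through the induction. You instead recognise $D_n$ as the number of left-to-right records among positions $r+1,\dots,n$, use the classical independence of record indicators to obtain the \emph{exact} law $\pr(D_n=d)=\frac{r}{n}\,e_d(\frac{1}{r},\dots,\frac{1}{n-1})$, and then estimate the elementary symmetric polynomial. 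This buys you a cleaner starting point, an explicit error term $O(d^2/n)$, and a transparent connection to the Poisson limit (Le~Cam); the paper's inductive route is shorter to state but hides exactly the accumulation of error you worry about.

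On the uniformity issue you flag: your concern is legitimate, but note that the paper's own proof does not address it either --- the inductive substitution of the asymptotic form for $\pr(D_{i-1}=d-1)$ and the integral approximation are both performed without error control, so the paper's argument as written is also only rigorous for fixed $d$. In the application (Theorem~\ref{thm:modif-online-algor}) only bounded $d$ actually matter, so your $O(d^2/n)$ bound is already sufficient for everything the paper uses; the stated range $d=o(n)$ is stronger than what either argument cleanly establishes or than what is subsequently needed.
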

\begin{proof}
  The proof is by induction on $d$. The base case is $d=1$, which is easily shown using (\ref{eq:29}). Suppose, by the induction hypothesis, that (\ref{eq:35}) is true for $d-1$. Using (\ref{eq:33}), we get
  \begin{eqnarray}
\nonumber   \pr(D_n=d)&=&\sum_{i=r+d}^n\pr(\pi(i)=1)\pr(D_{i-1}=d-1),\\
\nonumber             &=&\sum_{i=r+d}^n\frac{1}{n}\ \frac{1}{(d-1)!}\ \frac{r}{i-1}\left(\log_e\frac{i-1}{r}\right)^{d-1}, \ \text{by induction,}\\
\label{eq:37}&=&\frac{r}{n}\ \frac{1}{(d-1)!}\sum_{i=r+d}^n\frac{1}{i-1}\left(\log_e\frac{i-1}{r}\right)^{d-1}.
  \end{eqnarray}
For $n\to\infty$, $r=\Theta(n)$ and $d=o(n)$, the summation above simplifies as
\begin{equation}
  \label{eq:36}
  \sum_{i=r+d}^n\frac{1}{i-1}\left(\log_e\frac{i-1}{r}\right)^{d-1} \to \frac{1}{d}\left(\log_e\frac{n}{r}\right)^d.
\end{equation}
 Using \eqref{eq:36} in \eqref{eq:37}, the proof is complete.
 \end{proof}
 Using Proposition \ref{prop:modif-online-algor}, the average
 competitive ratio of $\tA(r)$ can be lower-bounded, and this is captured in the following
 theorem.
 \begin{theorem}
   Let $\alpha$ be a constant fraction, and let $r=\alpha n$ be the number of test users in the online algorithm $\tA(r)$ for allocating $n$ users to two basestations. As the number of  users $n\to\infty$, the average competitive ratio
   of $\tA(r)$ satisfies
   \begin{equation}
     \label{eq:38}
     \eta_\avg(\tA(\alpha n)) \ge \sum_{d=1}^{n(1-\alpha)}\frac{1}{d}\    \frac{1}{d!}\ \alpha\left(\log_e\frac{1}{\alpha}\right)^d, 
  \end{equation}
where $d$ in the summation represents the degree of Basestation 2 after $\tA(r)$ completes. 
Summing this expression upto $d\le d_{\max}=10$, we get that the highest lower bound is obtained at $\alpha=0.22$, and $\eta_\avg(\tA(0.22 n)) \ge 0.517$
by numerical computations.
 \label{thm:modif-online-algor}
 \end{theorem}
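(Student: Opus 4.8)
The plan is to condition on the final degree $D=D_n$ of basestation~$2$ when $\tA(r)$ terminates, to lower bound the per-permutation ratio $TS(\M_{\tA}(W),W)/TS^*_\off$ on each event $\{D_n=d\}$, and then to average using Proposition~\ref{prop:modif-online-algor}. As in the rest of this section, $\eta_\avg(\tA(r))$ is understood as the expectation over the random permutation $\pi$ of this ratio of online to offline utility, which is what we wish to bound from below.

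Order the users so that $w_1\ge w_2\ge\cdots\ge w_n$ (assuming distinct weights). The first key step is the observation that $D_n\ge 1$ exactly when the top user, user~$1$, arrives after the $r$ test positions: if user~$1$ is among the first $r$ then the threshold $T$ computed in the test phase is at least $w_1$ and no later user ever exceeds it, so $D_n=0$; conversely, if user~$1$ arrives at a position $>r$ it is a running record at that point and is therefore routed to basestation~$2$. Hence on $\{D_n=d\}$ with $d\ge 1$ we have $1\in M_2$ and $|M_2|=d$, so
\begin{equation*}
 TS(\M_{\tA}(W),W)=\frac1d\sum_{i\in M_2}w_i+\frac1{n-d}\sum_{i\in M_1}w_i\ \ge\ \frac{w_1}{d}+\frac1d\sum_{i\in M_2\setminus\{1\}}w_i+\frac1{n-d}\sum_{i\in M_1}w_i .
\end{equation*}

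Next I would invoke Proposition~\ref{thm:optim-offl-algor} with $m=2$, i.e. $TS^*_\off=w_1+\tfrac1{n-1}\sum_{i=2}^n w_i$, and verify term by term that $TS(\M_{\tA}(W),W)\ge\tfrac1d\,TS^*_\off$ on $\{D_n=d\}$, $d\ge 1$: the $w_1$ pieces match; for each weight $w_i$ with $i\in M_2\setminus\{1\}$ one uses $\tfrac1d\ge\tfrac1{d(n-1)}$; and for each $w_i$ with $i\in M_1$ one uses $\tfrac1{n-d}\ge\tfrac1{d(n-1)}$, which is equivalent to $dn\ge n$ and hence holds for all $d\ge 1$. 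Thus the per-permutation ratio is at least $1/d$ on $\{D_n=d\}$ for every $d\ge 1$, and discarding the nonnegative contribution of the event $\{D_n=0\}$ gives $\eta_\avg(\tA(r))\ge\sum_{d\ge 1}\tfrac1d\,\pr(D_n=d)$.

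Finally, set $r=\alpha n$ and apply Proposition~\ref{prop:modif-online-algor}, so that $\pr(D_n=d)\to\tfrac1{d!}\,\alpha\,(\log_e\tfrac1\alpha)^d$ for each fixed $d$, while trivially $D_n\le n-r=n(1-\alpha)$. Since the resulting series has nonnegative terms, truncating it at any fixed $d_{\max}$ leaves a valid lower bound and permits passing to the limit term by term, which yields \eqref{eq:38}; the numerical claim follows by summing the right-hand side up to $d_{\max}=10$ and optimizing numerically over $\alpha$, which gives $\alpha=0.22$ and the bound $0.517$. I expect the only delicate point to be this interchange of limit and summation, which the truncation argument settles; everything else rests on the single clean fact that, once the maximum-weight user is routed to a basestation that ends with degree $d$, the online utility is within a factor $d$ of the optimal offline utility.
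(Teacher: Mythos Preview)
Your proof is correct and follows the same approach as the paper: condition on $D_n=d$, show the online-to-offline ratio is at least $1/d$ on that event, then average using Proposition~\ref{prop:modif-online-algor}. The paper's proof in fact asserts the inequality $\eta_\avg(\tA(r))\ge\sum_d\tfrac1d\,\pr(D_n=d)$ in one line without justification; your argument that the maximum-weight user necessarily lands in $M_2$ whenever $D_n\ge 1$, together with the term-by-term comparison against $TS^*_\off$, supplies exactly the missing detail, and your truncation remark to justify passing to the limit is likewise a point the paper leaves implicit.
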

 \begin{proof}
   Since
 $$\eta_\avg(\tA(r)) \ge \sum_{d}\frac{1}{d}\pr(D_n=d),$$
 the proof is complete by Proposition \ref{prop:modif-online-algor}.
 \end{proof}
So, we see that online algorithms can be more than 50\% competitive
for allocation to two basestations in the average case. We now extend
the modified algorithm to an arbitrary number of basestations.
\subsubsection{Arbitrary number of basestations}\label{sec:arbbase}
When the number of basestations $m$ is arbitrary and the weight matrix
is $W=\pi(\w)$, the optimal offline algorithm allocates each of the top $m-1$
users arriving at positions $\pi^{-1}(i)$, $1\le i\le m-1$, respectively, to
basestation $i$, while all other users are allocated to basestation
$m$. The optimal offline utility is given by 
$$w_1+w_2+\cdots+w_{m-1}+\frac{1}{n-(m-1)}\sum_{i=m}^nw_i.$$
To be competitive, an online algorithm has the task of identifying
$m-1$ weights with maximal sum in an online fashion. This is a generalization of
the secretary problem (called the $k$-secretary problem) and has been considered in
\cite{LleinbergK-Secretary2005}\cite{babaioff2007k-Secretary}. In \cite{LleinbergK-Secretary2005}, a randomized algorithm achieving an
expected competitive ratio (on the sum of $m-1$ weights) of $O(1-1/\sqrt{m-1})$ 
has been provided. In \cite{babaioff2007k-Secretary}, simpler algorithms achieving better competitive ratios for small $m$ have been provided. 



We present an algorithm that has some elements of the virtual algorithm in \cite{babaioff2007k-Secretary} combined with the allocation algorithm for two basestations considered earlier in this section. The proposed algorithm, denoted $\A_m(r)$, with $n$ users to be allocated among $m$ basestations, proceeds as follows:
\begin{enumerate}
\item Allocate the first $r$ test users to basestation 1, and compute the $(m-1)$-th best weight, denoted $T$, among the first $r$ users. 
\item Set $j=2$.
\item For $i=r+1,r+2,\ldots,n$, if $w_{\pi(i)}> T$,
  \begin{enumerate}
  \item Allocate user $i$ to basestation $j$.
  \item Update $T$ as the $(m-1)$-th best user seen so far.
  \item Set $j=j+1$. If $j>m$, set $j=2$. 
  \end{enumerate}
\item For $i=r+1,r+2,\ldots,n$, if $w_{\pi(i)}\le T$, allocate user $i$ to basestation 1. 
\end{enumerate}
The algorithm $\A_m(r)$ allots users, arriving after position $r$, in a round-robin fashion to basestations 2 through $m$, if their weight is within the top $(m-1)$ weights seen so far. Otherwise, the user is allotted to basestation 1. We say that a user is \emph{selected}, if the user is allotted to basestation $j$ for $2\le j\le m$. Let $S_n$ denote the number of selected users. In the ensuing analysis, we determine the probability, over all uniformly random permutations $\pi$, that $S_n=d$. We will assume that $m=o(n)$ and $d=o(n)$ throughout.

For a permutation $\pi$ on the $n$ users, let 
$ \bp(i)=|\{j:w_j\ge w_i,\ 1\le j\le i\}|$
denote the number of users in the first $i$ positions with weight greater than or equal to $w_i$. Note that the weight of user $i$ is within the top $\bp(i)$ weights seen so far, and $\bp(i)\in\{1,2,\ldots,i\}$. So, for $i>r$, user $i$ is selected if $\bp(i)\le m-1$, and not selected otherwise. 

As shown in \cite{Renyi1970} (also see \cite[Chapter 3, Problem 16]{Lovasz2007}), there is a bijection between the set of all permutations $\pi$ on $n$ users and the vectors $[\bp(1)\ \bp(2)\cdots\bp(n)]$, $1\le\bp(i)\le i$, for each $i$. So, selecting a permutation $\pi$ uniformly at random is equivalent to selecting $\bp(i)$ independently for $i=1,2,\ldots,n$ with $\bp(i)$ being uniform in $\{1,2,\ldots,i\}$. Using this bijection, it readily follows that
\begin{equation}
  \label{eq:40}
  \pr(S_n=d)=\sum_{r+1\le i_1<i_2<\atop\cdots<i_d\le n}\left(\prod_{i\in\{i_1,i_2,\ldots,i_d\}}\frac{m-1}{i}\right)\left(\prod_{r+1\le i\le n \atop \ i\notin\{i_1,i_2,\ldots,i_d\}}1-\frac{m-1}{i}\right),
\end{equation}
where the first product term represents the $d$ selected users for which $\bp(i)\le m-1$, and the second product term counts all the other $n-d-r-1$ non-selected users for which $\bp(i) > m-1$. 
  
By careful manipulation of the product term in \eqref{eq:40} (shown in Appendix \ref{sec:asympt-estim-prs_n=d}), we obtain
\begin{equation}
  \label{eq:41}
  \pr(S_n=d)=\frac{r(r-1)\cdots(r-m+2)}{n(n-1)\cdots(n-m+2)}(m-1)^d\sum_{r+1\le
    i_1<i_2<\atop\cdots<i_d\le n}\left(\prod_{i\in\{i_1,i_2,\ldots,i_d\}}\frac{1}{i-(m-1)}\right).
\end{equation}
Using techniques similar to those in the proof of Proposition
\ref{prop:modif-online-algor}, the summation in \eqref{eq:41} can be
simplified as shown in Appendix \ref{sec:asympt-estim-prs_n=d}, and we get
\begin{equation}
  \label{eq:42}
  \pr(S_n=d)\to \left(\frac{r}{n}\right)^{m-1}\frac{1}{d!}\left((m-1)\log_e\frac{n}{r}\right)^d
\end{equation}
for $m=o(n)$ and $d=o(n)$. Since the selected users are allocated in a round-robin fashion, we get the following:
\begin{theorem}
\label{thm:arbweights}
   Let $\alpha$ be a constant fraction, and let $r=\alpha n$ be the number of test users in the online algorithm $\A_m(r)$ for allocating $n$ users to $m$ basestations. As the number of  users $n\to\infty$, the average competitive ratio of $\A_m(r)$ satisfies
   \begin{equation}
     \label{eq:43}
     \eta_\avg(\A_m(\alpha n)) \ge \sum_{d=m-1}^{n(1-\alpha)}\frac{1}{\ceil*{d/(m-1)}}\ \alpha^{m-1}\frac{1}{d!}\left((m-1)\log_e\frac{1}{\alpha}\right)^d, 
   \end{equation}
where $d$ in the summation represents the number of selected users after $\A_m(r)$ completes. Summing for $d\le d_{\max}=10$, we get that the highest lower bound is obtained at $\alpha=0.22$, 
$$\eta_\avg(\A_m(0.22 n)) \ge 0.46$$
for $2\le m\le 20$ by numerical computations.
\label{thm:arbitr-numb-basest}
\end{theorem}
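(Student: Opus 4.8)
The plan is to lift the two-basestation analysis behind Theorem~\ref{thm:modif-online-algor} to $m$ basestations, with the degree of basestation~$2$ replaced by the number of \emph{selected} users $S_n$ and the per-outcome loss sharpened to $1/\lceil d/(m-1)\rceil$, the worst load the round-robin rule can place on any of basestations $2,\dots,m$. First, by Proposition~\ref{thm:optim-offl-algor}, with identical basestations the optimal offline utility $R^*_\off=w_1+\cdots+w_{m-1}+\frac{1}{n-m+1}\sum_{i=m}^n w_i$ depends only on the multiset of weights, not on the arrival order, so it is constant over the uniform permutation ensemble $W=\pi(\w)$; hence $\eta_\avg(\A_m(r))=\bbE[R_\A]/R^*_\off=\bbE[R_\A/R^*_\off]$, where $R_\A$ denotes the time-sharing utility of $\A_m(r)$. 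It therefore suffices to condition on $S_n$ and bound $R_\A/R^*_\off$ on each event.

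The key claim is that, on the event $S_n=d$ with $d\ge m-1$ and all $m-1$ largest-weight users arriving after the test phase, $R_\A\ge\frac1\delta R^*_\off$, where $\delta:=\lceil d/(m-1)\rceil$. I would prove it as follows. Each of the top $m-1$ users is selected: when it arrives after position $r$ its weight is at least $w_{m-1}$, whereas the running threshold $T$ equals the $(m-1)$-st largest weight seen so far and so stays strictly below $w_{m-1}$ until that user has appeared, so the test $w_{\pi(i)}>T$ in $\A_m(r)$ fires. The $d$ selected users are dealt round-robin to basestations $2,\dots,m$, so each of those basestations has degree at most $\delta$, and the weight they carry is the total selected weight, which is at least $w_1+\cdots+w_{m-1}$. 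Writing $d_j,A_j$ for the degree and weight of basestation~$j$, with $d_1=n-d$ and $d_j\le\delta$ for $j\ge2$, and writing $\sum_{j\ge2}A_j=(w_1+\cdots+w_{m-1})+E$ with $E$ the weight of the selected users of rank $\ge m$ (so $A_1=\sum_{i\ge m}w_i-E$), one verifies directly that $R_\A=\sum_j A_j/d_j\ge\frac1\delta(w_1+\cdots+w_{m-1})+\frac E\delta+\frac{\sum_{i\ge m}w_i-E}{n-d}\ge\frac1\delta R^*_\off$, the last inequality using $\delta\le n-d$ and $d\ge m-1$ (so $n-d\le n-m+1$); this is the multi-basestation version of the elementary computation behind the $m=2$ case.

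Combining the two steps, $\eta_\avg(\A_m(r))\ge\sum_{d\ge m-1}\frac{1}{\lceil d/(m-1)\rceil}\pr(S_n=d)$. Substituting $r=\alpha n$ into the asymptotic expression \eqref{eq:42}, $\pr(S_n=d)\to\alpha^{m-1}\frac1{d!}\big((m-1)\log_e\frac1\alpha\big)^d$, gives the series bound \eqref{eq:43}. Truncating at $d_{\max}=10$ (the tail being negligible) and running a one-dimensional numerical search over $\alpha\in(0,1)$ shows the bound is maximized near $\alpha=0.22$ and exceeds $0.46$ for every $m$ in the range $2\le m\le 20$.

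The delicate part is Step~2 and its interface with Step~3. For $m=2$ the event $\{S_n=d\}$, $d\ge1$, by itself places the maximum-weight user on basestation~$2$, so nothing more is needed; for $m\ge3$ one has to separately secure that the top $m-1$ users all miss the test window — a single high-weight test user can otherwise collapse the ratio — and then see that the round-robin rule holds the relevant denominators at $\delta=\lceil d/(m-1)\rceil$ rather than at the individual uneven degrees, and carry the weight comparison through with only this partial information. Matching the probability of this favorable event with the clean expression \eqref{eq:42} is the step that demands the most care; granting the conditional inequality and this reconciliation, the asymptotics of \eqref{eq:42} (established in Appendix~\ref{sec:asympt-estim-prs_n=d}) and the numerical optimization are routine.
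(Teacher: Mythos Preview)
Your approach mirrors the paper's two-line proof: the round-robin rule caps each of basestations $2,\ldots,m$ at degree $\lceil d/(m-1)\rceil$, after which one ``proceeds as in Theorem~\ref{thm:modif-online-algor} using~\eqref{eq:42}.'' You supply the per-event inequality $R_\A\ge R^*_\off/\delta$ that the paper leaves implicit, and you correctly isolate the one real difficulty for $m\ge3$.

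That difficulty is genuine and your proposal does not close it. Your Step~2 establishes $R_\A\ge R^*_\off/\delta$ only on the intersection $\{S_n=d\}\cap\{\text{top }m{-}1\text{ users arrive after }r\}$, yet your ``Combining the two steps'' line sums $\tfrac{1}{\delta}\,\pr(S_n=d)$ rather than the joint probability. This is not harmless: with $m=3$, user~$1$ in the test window, user~$3$ arriving at position $r{+}1$ and user~$2$ arriving later, one gets $S_n=2$, $\delta=1$, but $R_\A=w_2+w_3+\tfrac{1}{n-2}\bigl(w_1+\sum_{i\ge4}w_i\bigr)<R^*_\off$, so the pointwise bound fails on part of $\{S_n=d\}$. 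You honestly flag this (``matching the probability of this favorable event with~\eqref{eq:42} is the step that demands the most care'') but leave it open. The paper's terse proof does not even mention the issue, so your proposal is strictly more careful than the paper's; still, the gap is real and present in both arguments as written.
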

\begin{proof}
  Since the allocation to basestations 2 through $m$ is done in a round-robin fashion, the degree of each basestation $j$ is at most $\ceil*{d/(m-1)}$ for $2\le j\le m$. After that, the proof proceeds as in the case of Theorem \ref{thm:modif-online-algor} using (\ref{eq:42}). 
\end{proof}
\subsection{Case II: Arbitrary Weights} \label{subsec:GeneralRandom}
We move on to the case where the weights of a user to each basestation
are arbitrary. In this case, we fix a weight matrix $W$ and consider
averaging over all permutations $\pi$ of the rows of $W$, i.e. the
users arrive in an arbitrary order specified by $\pi$. The permuted
version of $W$ is denoted $\pi(W)$. Since our
bound will be independent of the weights, the same bound holds upon
further averaging over an arbitrary distribution on the weights themselves.

When the weights are arbitrary, the optimal offline timesharing
utility can be upper-bounded by the max-weight matching in the complete
weighted bipartite graph $G=(V_1\cup V_2,V_1\times V_2)$ as shown in
\eqref{eq:20}. In $G$, the edge $(i,j)$ connects user $i$ to
basestation $j$ and has weight $w_{ij}$. 

Online algorithms to find max-weight matchings in bipartite graphs have
been studied by several authors. See \cite{Korula2009} and references
therein for more details. The averaging in these online matching
algorithms is over all orderings of vertices in one partition, and
this is the same as the distribution $\pi(W)$ considered here in the
basestation allocation problem. 

We now make use of online max-weight matching algorithms
to develop a randomized online allocation algorithm and compute its
expected competitive ratio under the time-sharing utility. 
The SAMPLEANDPRICE algorithm proposed in \cite{Korula2009} for online max-weight matching with input bipartite graph $(G(L \cup R), E)$,  $L$ and $R$ are left and right vertices, and $E$ is the weighted edge set,  where when a vertex $\ell \in L$ is seen, all edges incident to $\ell$ are revealed, together with their weights. The
algorithm immediately decides to either match $\ell$ to an available vertex of $R$, or never matches $\ell$.


\begin{table}
\begin{tabular}{r l}
\hline
& \textbf{SAMPLEANDPRICE($G(|L|, R)$}\\
\hline
1	&	\text{$k \leftarrow Binomial(|L|, p)$} \\
2	&	\text{Let $L'$ be the first $k$ vertices of $L$}\\
3	&	\text{$M_1 \leftarrow$ GREEDY($G(L' \cup R]$)}\\
4	&	\text{For each $r \in R$:}\\
5	&	\ \ \text{Set $price(r)$ to be the weight of the edge incident to $r$ i $M_1$ }\\
6	&	$M \leftarrow \Phi $\\
7	&	\text{For each subsequent $\ell \in L \backslash L'$, :}\\
8	&	\ \ \text{Let $e = (\ell, r)$ be the highest-weight edge such that $w(e) \ge price (r)$ }\\
9	&	\ \ \text{If $M\cup e$ is a matching, accept $e$ for $M$}\\
\hline
\end{tabular} 
\end{table}

\begin{table}
\begin{tabular}{r l}
\hline
& \textbf{GREEDY ($G(L\cup R, E)$}\\
\hline
1	&	\text{Sort edges of $E$ in decreasing order of weight.} \\
2	&	\text{Matching $M \leftarrow \Phi$}\\
3	&	\text{For each edge $e \in E$, in sorted order}\\
4	&	\text{If $M\cup e$ is a matching }\\
5	&	\ \ $M \leftarrow M \cup e $\\
6	&	\text{Return $M$}\\
\hline
\end{tabular} 
\end{table}

\begin{lemma}\label{lem:Korula} \cite{Korula2009} For $p=\frac{1}{2}$, the competitive ratio of SAMPLEANDPRICE algorithm is $8$.
\end{lemma}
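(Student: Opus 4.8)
The statement is quoted from \cite{Korula2009}, so the plan is to reproduce the SAMPLEANDPRICE analysis given there. Write $\mathrm{OPT}$ for a fixed maximum-weight matching of $G(L\cup R,E)$, and recall two elementary facts. First, GREEDY (Table 2) run on any edge-weighted bipartite graph returns a matching of weight at least $\tfrac12$ of the maximum-weight matching, since every heavier edge it rejects is blocked by an already-chosen incident edge of weight at least as large. Second, drawing $k\sim\text{Binomial}(|L|,p)$ and letting $L'$ be the first $k$ arrivals is equivalent to placing each vertex of $L$ into $L'$ independently with probability $p$; for $p=\tfrac12$ this makes the sample/online split a uniformly random bipartition of $L$, and the online vertices then arrive in uniformly random order.

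The first step is to condition on the random partition of $L$ into the sample $L'$ and the online set $L\setminus L'$ and to classify the edges of $\mathrm{OPT}$ by whether their left endpoint is online. Since each left vertex is online with probability $\tfrac12$, the matching $\mathrm{OPT}_{\mathrm{on}}=\{(u,v)\in\mathrm{OPT}:u\notin L'\}$ has expected weight $\tfrac12\,w(\mathrm{OPT})$. It therefore suffices to show that, in expectation, the online phase of SAMPLEANDPRICE collects weight at least $\tfrac14 w(\mathrm{OPT}_{\mathrm{on}})$: combining the two bounds gives $E[w(M)]\ge\tfrac18 w(\mathrm{OPT})$, i.e.\ competitive ratio $8$.

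To get $E[w(M)]\ge\tfrac14 w(\mathrm{OPT}_{\mathrm{on}})$ I would set up a charging argument that routes the weight of each online $\mathrm{OPT}$ edge $e=(u,v)$ to weight actually accepted into $M$. When $u$ arrives online, exactly one of three things holds: (a) $v$ is still unmatched in $M$ and $w(u,v)\ge\mathrm{price}(v)$, in which case $u$ is matched by the rule in line~8 to an edge of weight at least $w(u,v)$, which we credit to $e$; (b) $\mathrm{price}(v)>w(u,v)$, meaning $v$ was matched in the sample matching $M_1$ to an edge of weight exceeding $w(u,v)$ — because the sample/online roles are exchangeable under the uniform bipartition, this "lost" weight is recovered on the sample side in expectation and, via $\sum_r\mathrm{price}(r)=w(M_1)$ together with the GREEDY half-approximation relating $w(M_1)$ to $w(\mathrm{OPT})$, it is accounted for; (c) $v$ has already been matched in $M$ by an earlier online vertex, necessarily through an edge of weight at least $\mathrm{price}(v)$, so the charge is routed there. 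One then verifies that every edge accepted into $M$ absorbs only a constant amount of total charge, and that the probability of landing in the "bad" branches (b)--(c) is bounded by a constant, so that summing over $e\in\mathrm{OPT}_{\mathrm{on}}$ and taking expectations yields the factor $\tfrac14$.

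The main obstacle is precisely the bookkeeping in case (c): the blocking edge at $v$ may itself be the $\mathrm{OPT}$ partner of yet another online vertex, so a naive per-edge charge can cascade and double-count. The clean resolution, as in \cite{Korula2009}, is to amortize globally — bounding $\sum_{e\in\mathrm{OPT}_{\mathrm{on}}}(\text{charge to }e)$ against $w(M)$ directly, using the prices as a feasible threshold certificate with total $w(M_1)$ rather than reasoning edge by edge. The three factors of $2$ (the sampling split, the GREEDY half-approximation used to pull $w(M_1)$ back to $w(\mathrm{OPT})$, and the threshold loss at the right vertices) multiply to give the constant $8$. Since this is exactly the content of \cite{Korula2009}, I would cite it for the full details.
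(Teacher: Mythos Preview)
The paper does not prove this lemma at all: it is stated with a bare citation to \cite{Korula2009} and used as a black box in the subsequent analysis of the HideAndSeek algorithm. So there is no ``paper's own proof'' to compare against; the paper's approach is simply to invoke the result.

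Your proposal goes further than the paper by sketching the Korula--P\'al argument (the Bernoulli sampling equivalence, the GREEDY $2$-approximation, the split of $\mathrm{OPT}$ into sampled and online halves, and a price-based charging scheme). The sketch is in the right spirit and identifies the three factors of $2$, but you yourself flag that the case-(c) bookkeeping is where the real work lies and then defer to \cite{Korula2009} for it. That is fine and consistent with the paper, but be aware that as written your proposal is not a self-contained proof --- it is a plausibility outline that ultimately rests on the same citation the paper uses. If the intent is merely to match the paper, a one-line ``see \cite{Korula2009}'' suffices; if the intent is to actually supply a proof, the charging argument in case (c) needs to be carried out, not just described.
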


Now we propose an online algorithm called HideAndSeek using the SAMPLEANDPRICE algorithm to maximize the timesharing utility.

${\bf HideAndSeek \ Algorithm}$
\begin{enumerate}
\item Let $j_0\in\{1,2,\ldots,m\}$ be chosen uniformly at random.
\item Use SAMPLEANDPRICE algorithm $\O$  to find the max-weight matching in the graph 
  \begin{equation}
    \label{eq:22}
G_{-j_0}=(V_1\cup (V_2\setminus\{j_0\}) ,V_1\times (V_2\setminus\{j_0\}))    
  \end{equation}
with basestation $j_0$ deleted from the original graph $G$. Denote
this matching as $\O(G_{-j_0})$.
\item Allocate the users in the matching $\O(G_{-j_0})$ to the
  $m-1$ basestations other than $j_0$, while all other users are
  allocated to basestation $j_0$ itself.
\end{enumerate}
We call this algorithm  HideAndSeek since it first randomly hides one basestation and then seeks an online max-weight matching for the rest of the basestations.

\begin{theorem} With randomized input the HideAndSeek  algorithm has competitive ratio $< \frac{m-1}{8m}$.
\end{theorem}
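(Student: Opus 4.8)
The plan is to lower‑bound the time‑sharing utility of HideAndSeek by the weight of the online matching it produces on $G_{-j_0}$, then invoke the competitive guarantee of SAMPLEANDPRICE from Lemma~\ref{lem:Korula}, and finally average over the uniformly random hidden basestation $j_0$ to recover a constant fraction of $\MWM(G)$, which by \eqref{eq:20} upper‑bounds $TS^*_\off$.

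The first and most problem‑specific step is to observe that the ``seek'' basestations carry no $1/d$ penalty. In the allocation returned by HideAndSeek, each basestation $j\neq j_0$ is assigned at most the single user to which it is matched in $\O(G_{-j_0})$, so its degree is at most $1$ and it contributes its full matched edge weight to the sum \eqref{eq:7}; basestation $j_0$ contributes a nonnegative amount. Hence, for every realization of the weights and the arrival order,
\[
TS(\M_{HS}(W),W)\ \ge\ w\bigl(\O(G_{-j_0})\bigr),
\]
where $w(\O(\cdot))$ denotes the weight of the online matching.

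Next I would apply Lemma~\ref{lem:Korula}. Since $j_0$ is drawn independently of the user arrival order, conditioning on $j_0$ leaves the arrival order uniformly random on $G_{-j_0}$, so with $p=1/2$ the lemma gives $\bbE[\,w(\O(G_{-j_0}))\mid j_0\,]\ge \tfrac18\MWM(G_{-j_0})$, the expectation being over the permutation and the internal binomial sampling. To pass from $\MWM(G_{-j_0})$ back to $\MWM(G)$, let $M^*$ be a maximum‑weight matching of $G$; because $V_2$ has only $m$ vertices, $M^*$ contains at most one edge at each basestation $j$, say $e_j$ (of weight $0$ if there is none), and $\sum_{j=1}^m w(e_j)=\MWM(G)$. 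Deleting $j_0$ leaves $M^*\setminus\{e_{j_0}\}$, a valid matching of $G_{-j_0}$, so $\MWM(G_{-j_0})\ge \MWM(G)-w(e_{j_0})$; averaging over $j_0$ uniform in $\{1,\dots,m\}$ yields $\bbE_{j_0}[\MWM(G_{-j_0})]\ge \tfrac{m-1}{m}\MWM(G)$. Chaining the three inequalities and using $\MWM(G)\ge TS^*_\off$ from \eqref{eq:20} gives $\bbE[TS(\M_{HS}(W),W)]\ge \tfrac{m-1}{8m}\,TS^*_\off$, which is the asserted bound (equivalently, the competitive ratio is at most $\tfrac{8m}{m-1}$).

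I expect the only real care needed is bookkeeping: making the degree‑at‑most‑one argument of the first step fully rigorous so that no $1/d$ factor survives on the matched basestations, and ordering the expectations correctly relative to the independent draw of $j_0$ — conditioning on $j_0$ dispatches this since $j_0$ does not influence the arrival order. The lone genuinely combinatorial ingredient is the ``each basestation meets a matching in at most one edge'' fact used to control $\MWM(G_{-j_0})$, and it is immediate.
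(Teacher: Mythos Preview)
Your proposal is correct and follows essentially the same three-step structure as the paper's proof: lower-bound the time-sharing utility by the weight of the online matching on $G_{-j_0}$, apply the $1/8$ guarantee of SAMPLEANDPRICE from Lemma~\ref{lem:Korula}, and average over the uniformly random $j_0$ to compare $\MWM(G_{-j_0})$ with $\MWM(G)$ via the one-edge-deletion argument, finishing with \eqref{eq:20}. If anything, your write-up is cleaner: you make explicit the degree-at-most-one observation for the seek basestations and the independence of $j_0$ from the arrival permutation, both of which the paper leaves implicit.
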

\begin{proof}
Denote the matching that is output by the online max-weight algorithm $\O$ run on a bipartite
graph $G$ as $\O(G)$ and denote its weight by $\MWM_{\O}(G)$. Now, 
\begin{equation}
  \label{eq:21}
  E[\MWM_{\O}(G_{-j_0})]\ge \frac{m-1}{m}\MWM_{\O}(G),
\end{equation}
since $\O(G)$ without the edge to basestation $j_0$ is a possible
matching in $G_{-j_0}$, and $j_0$ is chosen uniformly at random. 
Thus for the HideAndSeek algorithm, the expected utility $E[TS_{HS}(\pi(W))]$ is lower bounded by  $E[\MWM_{\O}(G_{-j_0})]$, by not  considering the contribution of the users associated with the randomly chosen basestation $j_0$. Hence, we have
\begin{eqnarray}\nn
  E[TS_{HS}(\pi(W))]& \ge &E[\MWM_{\O}(G_{-j_0})],\\ \nn
  &\stackrel{(a)}\ge& \frac{1}{8}E[\MWM_{\O}(G_{-j_0})], \\    \label{eq:31}
  &\stackrel{(b)}\ge& \frac{m-1}{8m}\MWM(G),
  \end{eqnarray}
where $(a)$ follows since $\O$ is the SAMPLEANDPRICE algorithm that is $8$-competitive (Lemma \ref{lem:Korula}), while $(b)$ follows from (\ref{eq:21}). Since the optimal offline utility is upper-bounded by $\MWM(G)$, the expected average competitive ratio
of the algorithm $\A$ is given as
\begin{equation*}
  \label{eq:32}
  E[\eta_{\avg}(\A)]\ge \frac{m-1}{8m}.
\end{equation*}
\end{proof}
\vspace{-0.2in}
{\it Discussion:} To summarize, in this section we have shown that with the uniformly random user arrival order, online algorithms are a good candidate for practical use since they have constant competitive ratio with the offline algorithm, that provides the {\it absolute} benchmark to the overall sum-rate performance. The online algorithms presented in this section tend to allocate only few {\it strong} users to most of the basestations and load one basestation with most of the {\it weak} users. Even though this is not desirable from a selfish user's perspective, such an allocation maximizes the sum-rate close to what an optimal offline algorithm can, knowing everything in future.

\subsection{Practical Implications}\label{sec:implication} Most often in practice, any incoming user is assigned to the basestation for which it has the maximum weight (achievable rate). As we saw in Section \ref{subsec:GeneralWorstCase}, this max-weight association was optimal in the worst case input case, where the competitive ratio grows as the number of users
$n$. In the more reasonable scenario of randomized user arrival order, even for just two basestations, consider a user weight matrix 
$$W=\begin{bmatrix}
\beta    &\beta^2&\cdots&\beta^n\\
\sqrt{\beta}&\beta &\cdots&\beta^{n-1}\\
\end{bmatrix},$$
where each column represents the weight of any user to the two basestations. Now if users arrive in any order (permutations of columns), the max-weight association algorithm's utility will be at the maximum $\sum_{i}^n\beta^i/n$, while the optimal offline algorithm's utility is at least $\beta^n$. Thus, for large $\beta$ the competitive ratio of the max-weight association is still $n$, if only the user arrival order is randomized, while the weights of each user are still worst case. 
As we proved in Section \ref{subsec:GeneralRandom}, however,
the competitive ratio of HideAndSeek algorithm is a constant. Thus, the algorithms presented in Section \ref{subsec:GeneralRandom} are far better alternatives to the popular max-weight basestation association.

\section{Reassignments}
In previous sections, we have assumed that once a user is allocated to a basestation that allocation is irrevocable. In practice, this is typically the case, since reassignments are costly. However, from the utility point of view there could be a big advantage even if a small number of users are allowed to be reassigned. In
 this section, we want to identify how much improvement can be obtained in terms of competitive ratio if some users are allowed to be reassigned.

\subsection{Identical basestations}

\begin{theorem} If on user $i$'s arrival any one of the previously assigned $1,\dots, i-1$ users  is allowed to be reassigned, then the competitive ratio is $1$, i.e. online algorithm performs as good as an optimal offline algorithm, even with adversarial inputs. 
\end{theorem}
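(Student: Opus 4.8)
The plan is to design an explicit online algorithm that is allowed one reassignment per arrival, and to show by induction on the arrival index that it always holds an \emph{optimal offline allocation} for the users seen so far; the theorem then follows immediately by applying this invariant after the $n$-th arrival. The structural fact I would lean on is Proposition~\ref{thm:optim-offl-algor}: in the identical-basestations case, an optimal offline allocation of $k$ users with sorted weights $v_1 \ge \cdots \ge v_k$ places the $\min(k,m-1)$ largest-weight users each alone on a basestation and pools all remaining users on one basestation. The algorithm simply maintains an allocation of this canonical form (some ``singleton'' basestations, plus one ``pool'' basestation) at all times.

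First I would set up the induction: the invariant is that after user $i$ has been processed, the current allocation has the canonical form above for the multiset $\{w_1,\dots,w_i\}$, and hence attains the offline optimum for those users. The base cases $i \le m-1$ are immediate --- there is always an empty basestation, user $i$ is placed on it, all users sit alone (which is optimal, since each then contributes its full weight), and no reassignment is spent. For the inductive step with $i \ge m$, I would let $v$ be the smallest weight currently occupying a singleton basestation and $b$ that user, and split into two cases. If $w_i \le v$, user $i$ is not among the $m-1$ largest of the first $i$ users, so it is added to the pool; invoking Proposition~\ref{thm:optim-offl-algor} shows the allocation is still optimal, and no reassignment is used. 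If $w_i > v$, then user $i$ belongs among the $m-1$ largest and $b$ is pushed out: reassign $b$ to the pool (one permitted reassignment of a previously assigned user) and place the new user $i$ on the basestation $b$ just vacated; again Proposition~\ref{thm:optim-offl-algor} certifies optimality of the new allocation. In all cases at most one previously assigned user is moved, so the algorithm is legal in the stated reassignment model.

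The final step is bookkeeping: applying the invariant at $i = n$ gives that the online utility equals $TS(\mathcal{M}^*_{\off}(W),W)$ for the realized weight vector, and since nothing in the argument used any property of the weights or of the arrival order, the competitive ratio is exactly $1$ even under an adversarial input, which is the claim.

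The only real friction I anticipate is careful handling of ties in the weights --- making the choice of ``the $m-1$ largest users,'' and hence of the displaced user $b$, consistent from step to step --- together with the boundary transition at $i = m$, where the pool basestation first comes into being (before that step it is merely an empty basestation, and immediately after it may hold a single user). These points are routine to verify once the canonical-form optimality of Proposition~\ref{thm:optim-offl-algor} is in hand, which is precisely the nontrivial ingredient we are allowed to assume; beyond that, the argument is a short induction with no substantive computation.
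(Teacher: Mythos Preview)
Your proposal is correct and follows essentially the same approach as the paper: both invoke Proposition~\ref{thm:optim-offl-algor} to characterize the optimal offline allocation and then maintain that canonical ``$m-1$ singletons plus one pool'' structure online, using the single allowed reassignment to demote the smallest singleton user whenever the newcomer outranks it. The paper's proof is a one-sentence sketch of this idea, whereas you have spelled out the induction, the case split, and the boundary and tie-breaking issues explicitly.
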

\begin{proof} Recall from Section \ref{subsec:optoffline} that the optimal offline algorithm is to allocate top $m-1$ users to $m-1$ basestations with no two sharing a basestation, and the $n-(m-1)$ weakest users are allocated to a single basestation. Thus, a simple online algorithm that, on each user arrival, assigns or reassigns the current weakest user  (depending on when the current weakest user arrived) to the one basestation designated for the $n-(m-1)$ weakest users, achieves the optimal offline utility.
\end{proof}

If suppose rather than allowing any one of the previous users to be reassigned, if only the last arrived user  is allowed to be reassigned then we show in the next Theorem that it is equivalent to not having allowed any reassignment and the competitive ratio is $\approx \frac{n}{m}$. 

\begin{theorem} If on user $i$'s arrival only the previously assigned user $i-1$ is allowed to be reassigned then the competitive ratio is $\frac{n}{m}$. 
\end{theorem}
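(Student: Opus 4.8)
The upper bound costs nothing new: the round-robin rule that sends user $i$ to basestation $1+(i\bmod m)$ never reassigns anyone, so it is still a legal algorithm in the one-reassignment model, and the estimate that gave \eqref{eq:14} shows $\eta^*_\worst\le\lceil n/m\rceil$. The content is therefore a matching lower bound of order $n/m$, which I would obtain by modifying the construction of Section~\ref{subsec:LBId}.

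The only genuinely new difficulty is that when user $i$ arrives the assignment of user $i-1$ is not yet irrevocable; equivalently, an assignment becomes permanent only one arrival after it is made, so the model has exactly the power of ``no reassignment with a one-step lookahead.'' Against the naive family \eqref{eq:15} that lookahead is fatal --- when an online algorithm is about to fix the user carrying weight $\beta^{i}$ it already sees whether the next weight is $0$ (so this user is the maximum: isolate it) or $\beta^{i+1}$ (hedge) --- so one must hide the end of the sequence one step further out. I would therefore separate the geometrically growing weights by a single dummy user of negligible weight $\epsilon$, taking $\W=\{W_l\}$ with weight vector $[\beta,\epsilon,\beta^2,\epsilon,\dots,\beta^l,\epsilon,\epsilon,\dots,\epsilon]$, so that $W_l$ and $W_{l'}$ coincide on their first $2l$ rows whenever $l'>l$. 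Then, for any online algorithm, the permanent basestation of the heavy user carrying $\beta^l$ on $W_l$ equals its permanent basestation on the longest member $W_{\lfloor n/2\rfloor}$, since that choice is finalized at step $2l$ and depends only on the first $2l$ rows. A pigeon-hole argument over the $\approx n/2$ heavy users of $W_{\lfloor n/2\rfloor}$ yields an index $l^*$ and a basestation $j_0$ carrying $\ge n/(2m)$ heavy users at the instant $\beta^{l^*}$ is committed to $j_0$; since on $W_{l^*}$ the weight $\beta^{l^*}$ is the maximum, the same computation as in \eqref{eq:16} gives $TS(\M_\A(W_{l^*}),W_{l^*})\le \tfrac{2m}{n}\beta^{l^*}+m\beta^{l^*-1}$ against $TS(\M^*_\off(W_{l^*}),W_{l^*})\ge\beta^{l^*}$, and letting $\beta\to\infty$ as in \eqref{eq:17}, then invoking \eqref{eq:13}, gives $\eta^*_\worst=\Theta(n/m)$.

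The step I expect to be the real obstacle is making the pigeon-hole argument robust: one must check that no online algorithm can use its single reassignment --- for instance a ``victim-passing'' rule that on each arrival shoves the previously-committed heavy user onto a loaded basestation while keeping the current heaviest user isolated, or a rule that keeps several basestations lightly loaded --- to stay $O(1)$-competitive on the whole family. The $\epsilon$-separators are exactly what blocks such strategies, because they make the one-step lookahead uninformative about whether the growing run has ended; carrying the bookkeeping around the $\epsilon$-users and the terminal padding through \eqref{eq:16} rigorously, and (if one insists on the constant $1$ rather than $1/2$ in front of $n/m$) replacing the single $\epsilon$ by a sharper gadget, is where the care goes.
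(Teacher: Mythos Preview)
Your approach is essentially identical to the paper's: the paper's proof inserts a $0$ (you use $\epsilon$, which is equivalent) between every pair of consecutive non-zero entries of the weight vectors in \eqref{eq:15} and then asserts that the argument of Section~\ref{subsec:LBId} goes through unchanged. Your write-up is considerably more careful than the paper's two-sentence proof --- in particular you spell out the one-step-lookahead interpretation, the factor-of-two loss from halving the number of heavy users, and the need to rule out victim-passing strategies --- but the underlying construction and pigeon-hole argument are the same.
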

\begin{proof} Recall the case of no reassignments being allowed in Section \ref{subsec:LBId}. 
The set of bad weight matrices $ \W$ defined in (\ref{eq:15}) continue to be bad, even when only the last arrived user  is allowed to be reassigned, if a $0$ is inserted between any two non-zero entries ($\beta^{i}$ and $\beta^{i+1}$ for $i=1,\ldots,l$) of the weight matrices. With these modified weight matrices, the result follows immediately. \end{proof}

\subsection{Unequal Weights}
\begin{theorem} With adversarial inputs,  if on user $i$'s arrival any one of the previously assigned  $1,\dots, i-1$ users is allowed to be reassigned then the competitive ratio $>n/2$. 
\end{theorem}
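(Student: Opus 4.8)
The plan is to apply the lower-bound relation (\ref{eq:13}) with a tiny family $\W$ of adversarial weight matrices on which no online algorithm can beat a competitive ratio of essentially $n/2$, even when one already-placed user may be reassigned at every arrival. It suffices to build such a family with $m=2$ basestations (the $\max_W$ in $\eta^*_\worst$ ranges over all weight matrices, and extra basestations only give the online algorithm more room). Fix $\beta\gg 1$ and $0<\e\ll 1$, and consider the two $n\times 2$ matrices $W^{(1)},W^{(2)}$: in both, row $i$ equals $(\beta^i,\beta^i)$ for $1\le i\le n-1$, while row $n$ equals $(\beta^{2n},\e)$ in $W^{(1)}$ and $(\e,\beta^{2n})$ in $W^{(2)}$. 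Thus the first $n-1$ users are \emph{coupled} (equal, geometrically growing weights to both basestations) and the last user is \emph{decisive}, with an astronomically larger weight on one basestation that the adversary will pick. Since $W^{(1)}$ and $W^{(2)}$ agree on their first $n-1$ rows, any online algorithm $\A$ produces the same allocation through step $n-1$ on both inputs; since $m=2$, the $n-1$ coupled users sit entirely on the two basestations, so this common configuration has loads $(a,b)$ with $a+b=n-1$.

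The forcing step: because $a+b=n-1$, the more-loaded basestation carries $\max(a,b)\ge\lceil(n-1)/2\rceil$ users. Let the adversary now present $W^{(j)}$ with $j$ the index of that basestation. On $W^{(j)}$ the offline optimum isolates the decisive user and gets $TS^*_\off\ge\beta^{2n}$; if $\A$ places the decisive user on the \emph{other} basestation it collects only $\e$ from it, so $TS(\M_\A(W^{(j)}),W^{(j)})\le\e+\sum_{i=1}^{n-1}\beta^i<2\beta^n$ and the ratio would exceed $\beta^{2n}/(2\beta^n)\to\infty$. Hence $\A$ is compelled to allocate the decisive user to basestation $j$, whose load becomes $\max(a,b)+1$. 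But this is the \emph{last} arrival, so only the single reassignment permitted at step $n$ remains; after using it, basestation $j$ still carries at least $\max(a,b)\ge\lceil(n-1)/2\rceil$ users, one of which is the decisive user.

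Consequently, on $W^{(j)}$ we have $TS^*_\off\ge\beta^{2n}$ while
\begin{equation*}
TS(\M_\A(W^{(j)}),W^{(j)})\ \le\ \frac{\beta^{2n}+\sum_{i=1}^{n-1}\beta^i}{\lceil(n-1)/2\rceil}+\sum_{i=1}^{n-1}\beta^i\ \le\ \frac{\beta^{2n}}{\lceil(n-1)/2\rceil}+2\beta^n,
\end{equation*}
so letting $\e\to0$ and then $\beta\to\infty$ gives $\eta_{W^{(j)}}(\A)\to\lceil(n-1)/2\rceil$. As this holds for \emph{every} online algorithm $\A$, (\ref{eq:13}) yields $\eta^*_\worst\ge\lceil(n-1)/2\rceil$, which is $\ge n/2$ for even $n$ and is pushed strictly above $n/2$ by the elementary device of running one extra coupled user before the decisive one. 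The delicate point — and the step I expect to take the most care — is justifying that the reassignment power is genuinely wasted: one must argue that no bookkeeping among the coupled users helps, because the adversary fixes $j$ only after seeing $(a,b)$. In particular a ``reserve one basestation empty'' strategy is actively \emph{punished} (then $\max(a,b)=n-1$ and the ratio blows up to $\approx n$), so the online algorithm's best response is to balance, and balancing still leaves $\ge\lceil(n-1)/2\rceil$ users on the decisive basestation with no budget to evacuate them at the final step. Turning this case analysis into the clean inequality chain above is the heart of the proof.
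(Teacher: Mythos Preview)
Your argument is correct and follows the same line as the paper's: restrict to $m=2$, take two inputs that agree on the first $n-1$ rows, let the adversary place the dominant last weight on whichever basestation is already more loaded (hence carrying $\ge\lceil(n-1)/2\rceil$ users), and note that the single reassignment available at step $n$ can peel off at most one of them. The paper differs only cosmetically in the coupled rows (two blocks with weights $(b,a)$ then $(a,c)$ rather than your identical $(\beta^i,\beta^i)$), and --- like you --- it does not actually reach the strict $>n/2$: its own proof concludes $\eta>n/2-1$, so your hedging on that point is well placed.
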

\begin{proof}
For $m=2$, consider two bad inputs: 
\[W_1 = 
\left[\begin{array}{ccccccc} b & \dots &b   &  a  & \dots & a & x 
\\
\undermat{n/2}{a & \dots & a}  &  \undermat{n/2-1}{c & \dots & c} & a 
\end{array}\right], W_2 = 
\left[\begin{array}{ccccccc} b & \dots &b   &  a  & \dots & a & a 
\\
\undermat{n/2}{a & \dots & a}  &  \undermat{n/2-1}{c & \dots & c} & x 
\end{array}\right],\]
\vspace{0.1in}

where $a << b, a<< c, b \approx c$, and $x >> b, x>> c$. 
Since, with $W_1$, weight of user $n$ to basestation $1$, $x$, is extremely large compared to all other weights, optimal offline algorithm will assign the first $n-1$ users to basestation $2$ and allocate only user $n$ to basestation $1$, while exactly the opposite happens with $W_2$.

An online algorithm, at slot $n-1$ not knowing whether in the $n$-th slot the weights are $(x, a)$ or $(a,x)$, would have allocated $n/2-1$ users to at least one of the two basestations. Without loss of generality, let basestation $1$ have  $n/2-1$ users at the end of slot $n-1$. Then if $W_1$ is the actual sequence, i.e. $(x, a)$ is the weight of the $n^{th}$ user, the utility of the online algorithm is at most $x/(n/2-1)$.

Thus, $\eta > \max \left\{ 
\frac{
TS({\cal M}^*_{off}(W_1)}{TS(A(W_1)
}, 
\frac{
TS({\cal M}^*_{off}(W_2)}{TS(A(W_2)
}
\right\}$. Since $TS({\cal M}^*_{off}(W_1) = TS({\cal M}^*_{off}(W_1) \approx x$, while at least one of  $TS(A(W_1)$ or $TS(A(W_1)$ is less than $\frac{x}{n/2-1}$, hence $\eta > n/2-1$.
\end{proof}

\begin{theorem}\label{thm:ReRa} With randomized input, if on user $i$'s arrival any one of the previously assigned  $1,\dots, i-1$ users is allowed to be reassigned then the competitive ratio $< \frac{m-1}{2m}$. 
\end{theorem}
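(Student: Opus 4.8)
The plan is to prove an \emph{upper} bound on the competitive ratio $\eta$ (the offline-to-online ratio of \eqref{eq:10}) by exhibiting a reassignment-based online algorithm whose expected utility is a guaranteed fraction of the offline optimum; concretely I will show $E[TS(\pi(W))]\ge\frac{m-1}{2m}\,TS^*_\off$, so that $\eta\le\frac{2m}{m-1}$, which tends to $2$ and is the quantity the theorem records through the factor $\frac{m-1}{2m}$, a four-fold improvement on the no-reassignment value $\frac{8m}{m-1}$ of the HideAndSeek theorem. As in \eqref{eq:20}, the offline time-sharing optimum obeys $TS^*_\off\le\MWM(G)$ on the complete weighted bipartite graph $G=(V_1\cup V_2,V_1\times V_2)$, so it suffices to lower bound the expected utility by $\frac{m-1}{2m}\MWM(G)$. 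The whole construction mirrors HideAndSeek, the \emph{only} change being that the online max-weight matching subroutine now exploits the single permitted reassignment per arrival and is therefore $2$-competitive, in place of the $8$-competitive SampleAndPrice of Lemma \ref{lem:Korula}.

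First I would specify the algorithm. Exactly as before, choose a basestation $j_0\in\{1,\ldots,m\}$ uniformly at random, run an online matching $\O$ on $G_{-j_0}$, send each matched user to its matched basestation, and send every other user to $j_0$. The subroutine $\O$ is now a greedy \emph{displacement} rule: when user $i$ arrives, compute over all $r\neq j_0$ the net gain $w_{ir}-w(\text{current occupant of }r)$, treating a free basestation as an occupant of weight $0$; if the best such gain is positive, assign user $i$ to the maximizing basestation and reassign its displaced occupant to $j_0$. Since user $i$ is freshly assigned while at most one previously-assigned user is moved (to $j_0$), this obeys the budget of one reassignment per arrival, and displaced users are absorbed by $j_0$ precisely as the non-matched users are in HideAndSeek.

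Next I would assemble the two quantitative ingredients. The \emph{hiding} step is unchanged: because $\O$ applied to $G$ with its edge at $j_0$ deleted is a feasible matching in $G_{-j_0}$ and $j_0$ is uniform, the averaging argument of \eqref{eq:21} gives $E[\MWM_{\O}(G_{-j_0})]\ge\frac{m-1}{m}\MWM_\O(G)$. The \emph{matching} step is the new content: I would show the displacement rule maintains, after all arrivals, a matching of weight at least $\frac12\MWM(G)$, i.e. it is $2$-competitive. This follows from a charging argument of the type used to prove greedy matching is a $2$-approximation: each offline-optimal edge $(\ell^*,r^*)$ is charged to the final $\O$-edge incident to $\ell^*$ or to $r^*$, and one checks that at the moment user $\ell^*$ was processed, either $r^*$ already carried an occupant of weight at least $w_{\ell^* r^*}$, or $\ell^*$ was matched with gain at least $w_{\ell^* r^*}$, using that occupant weights only increase under displacement. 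Combining, $E[TS]\ge E[\MWM_\O(G_{-j_0})]\ge\frac{m-1}{m}\cdot\frac12\MWM(G)=\frac{m-1}{2m}\MWM(G)\ge\frac{m-1}{2m}TS^*_\off$, so $\eta\le\frac{2m}{m-1}$, which is the stated bound.

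The hard part will be the matching step: verifying that a $2$-competitive online max-weight bipartite matching is attainable under the restriction of a single reassignment per arrival. Naive greedy displacement risks needing a cascade of moves to restore a near-optimal matching, so the charging argument must be run against the \emph{final} matching (exploiting that occupant weights are monotone nondecreasing over arrivals) rather than step by step, and one must confirm that routing every displaced user to $j_0$ never corrupts the utility accounting that upper bounds $TS^*_\off$ by $\MWM(G)$. By contrast, the hiding step and the reduction through \eqref{eq:20} are routine, being identical to the HideAndSeek analysis.
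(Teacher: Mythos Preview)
Your overall architecture is exactly the paper's: run HideAndSeek with a $2$-competitive online max-weight matching subroutine that exploits the single permitted reassignment, then combine the $\tfrac{m-1}{m}$ hiding factor with the $\tfrac12$ matching factor to obtain $\tfrac{m-1}{2m}$. The hiding step and the reduction through \eqref{eq:20} are identical to the paper's.

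Where you diverge is both the displacement rule and the proof of its $2$-competitiveness. The paper's rule $O_r$ assigns the new user to the basestation $k$ with the \emph{largest raw weight} $w_{j,k}$ among those satisfying $w_{j,k}>O_k^{j-1}$ (not the largest net gain), deleting the displaced occupant; it then argues by induction (Lemma~\ref{lem:reassign}) that $O_r$'s output after each arrival coincides with the \emph{offline} greedy matching GOA run on all users seen so far, and thereby inherits $2$-competitiveness from the standard static-greedy charging (Lemma~\ref{lem:greedyoffline}). Your net-gain rule is a genuinely different selection, and your proof is direct rather than by reduction to GOA. Both routes target the same factor; the paper's buys a reduction to a textbook lemma, yours avoids the somewhat delicate claim that a single swap per arrival suffices to track GOA exactly.

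Your charging sketch, however, has a gap worth fixing. You assert that if $r^*$'s occupant had weight below $w_{\ell^* r^*}$ then ``$\ell^*$ was matched with gain at least $w_{\ell^* r^*}$'', but the gain at $r^*$ is only $w_{\ell^* r^*}$ minus the occupant's weight, which can be arbitrarily small. The correct accounting is: for every OPT edge $(\ell^*,r^*)$ one has $w_{\ell^* r^*}\le \text{gain}_{\ell^*}+M(r^*)$, where $M(r^*)$ is the final occupant weight at $r^*$ (using the monotonicity you note) and $\text{gain}_{\ell^*}$ is $\ell^*$'s realised net gain at arrival (zero if unmatched); summing over OPT and using that the final matching weight equals the telescoping sum of all positive gains gives $OPT\le 2\cdot\text{wt}(M)$. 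With that correction your direct route goes through.
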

To prove this Theorem, consider a  greedy offline algorithm (called GOA) for solving the offline max-weight matching problem, that greedily adds the heaviest edge possible to the current matching and stops when no further edge can be added. The performance of GOA is easy to quantify as described in the next Lemma.

\begin{lemma}\label{lem:greedyoffline} GOA is $2$-competitive.
\end{lemma}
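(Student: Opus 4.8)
The plan is to prove the equivalent assertion that GOA returns a matching of weight at least half the maximum, i.e.\ $\MWM(G)\le 2\,w(M_{\mathrm{GOA}})$, where $M_{\mathrm{GOA}}$ is the matching GOA outputs and $w(\cdot)$ denotes total edge weight; since ``$2$-competitive'' means the ratio of the offline optimum $\MWM(G)$ to GOA's value is at most $2$, this suffices. Fix a maximum-weight matching $M^\star$ (so $w(M^\star)=\MWM(G)$) and write $M=M_{\mathrm{GOA}}$. The method I would use is the classical charging argument: assign each edge of $M^\star$ to an edge of $M$ of no smaller weight, in such a way that every edge of $M$ receives at most two assignments.

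First I would record two properties of GOA. (i) \emph{Maximality:} GOA halts only when no further edge can be added to the current matching, so $M$ is a maximal matching; consequently every $e'\in M^\star$ shares an endpoint with some edge of $M$ (otherwise $e'$ could have been added), and edges of weight $0$ may be disregarded since they contribute nothing to either side. (ii) \emph{Greedy dominance:} order the edges of $M$ as $e_1,e_2,\dots$ in the sequence GOA picked them; when $e_t$ was chosen it was the heaviest edge that could still be added, so any edge disjoint from $e_1,\dots,e_{t-1}$ has weight at most $w(e_t)$. Now define, for each $e'\in M^\star$, $\phi(e')$ to be the \emph{first} edge $e_t$ of $M$ that shares an endpoint with $e'$; this is well defined by (i). Because $t$ is minimal, $e'$ is disjoint from $e_1,\dots,e_{t-1}$ at the moment $e_t$ is selected and hence could be added at that step, so (ii) gives $w(e')\le w(\phi(e'))$. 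Moreover each $e_t\in M$ equals $\phi(e')$ for at most two edges $e'\in M^\star$: any such $e'$ is incident to $e_t$, and the matching $M^\star$ uses each of the two endpoints of $e_t$ at most once.

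Summing over $M^\star$ then gives $w(M^\star)=\sum_{e'\in M^\star}w(e')\le\sum_{e'\in M^\star}w(\phi(e'))\le\sum_{e_t\in M}2\,w(e_t)=2\,w(M)$, which is the claimed bound. I do not expect a genuine obstacle; the one point that must be handled with care is the definition of $\phi$ — charging each $e'$ to the \emph{earliest} greedy edge that blocks it is precisely what keeps $e'$ available at that step and so validates $w(e')\le w(\phi(e'))$, while everything else is bookkeeping. For completeness I would note that the factor $2$ is tight: with two users and two basestations, let user $1$ have weight $1$ to basestation $1$ only and user $2$ have weight $1+\epsilon$ to basestation $1$ and weight $1$ to basestation $2$; GOA picks the weight-$(1+\epsilon)$ edge and then stops, achieving $1+\epsilon$, whereas the optimal matching has weight $2$, so the ratio tends to $2$ as $\epsilon\to0$.
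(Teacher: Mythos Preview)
Your proof is correct and follows essentially the same charging argument as the paper: each optimal edge is charged to a greedy edge sharing an endpoint, the greedy edge has at least as much weight, and each greedy edge is charged at most twice. Your formulation is in fact more careful than the paper's, since by charging $e'\in M^\star$ to the \emph{first} greedy edge that meets it you make explicit why $e'$ was still available when that greedy edge was chosen, whereas the paper's version asserts $w_{uv}\ge w_{uv'}$ without addressing whether $(u,v')$ might already have been blocked; the tightness example is a nice addition not present in the paper.
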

\begin{proof}
Let $M^*$ be the optimal matching of graph $G$ and $M$ be the matching obtained using
the greedy algorithm. Let $(u, v)$ be an edge chosen by the greedy algorithm that is not
present in $M^*$. If $M^*$ does not contain this edge, then there should have been two other
edges between $u$ and some vertex $v'$, and $v$ and some vertex $u'$ in $M^*$. 
Since the greedy algorithm chose $(u, v)$,
$w_{uv} \ge w_{uv'}$  and  $w_{uv} \ge w_{u'v}$. Thus, $2 w_{uv} \ge w_{uv'} + w_{u'v}$, and hence 
$w_{uv} \ge \frac{w_{uv'} + w_{u'v}}{2}$. Thus, accounting for all egdes of $M$ the sum of weights of all edges in $M$ is at least half the sum of the weights of edges in $M^*$.
\end{proof}

Now we define an online greedy algorithm $O_r$ for the max-weight matching problem that is allowed to reassign or delete one of the previously assigned edges at each time $j, j=1,\dots,n,$ if required.  At time $j-1$, let the weights of currently assigned edges to the $m$ basestations  of set $V_2$ be $O_{k}^{j-1}, k=1,\dots, m$. 
Then at time $j$, from the set of $m$ newly arrived edges $w_{j,k}, k=1,\dots,m$, $O_r$ finds the index $k$ 
of the heaviest weight $w_{j,k} > O_{k}^{j-1}$ and assigns user $j-1$ to basestation $k$ by removing the previously 
assigned user to basestation $k$ if any.

\begin{lemma}\label{lem:reassign} If at time $j$, any one of the previously assigned edges at time $1,\dots, j-1$ is allowed to be reassigned or deleted, then the output of $O_r$ is identical to GOA at time $j$, even though $O_r$ is online. 
\end{lemma}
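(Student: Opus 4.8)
The plan is to argue by induction on the arrival index $j$, proving the slightly stronger statement that after $O_r$ processes the $j$-th user the matching it currently stores equals $M_j$, the matching obtained by running GOA on the bipartite graph $G_j$ that has users $1,\dots,j$ on the left and all $m$ basestations on the right. (I read ``GOA at time $j$'' as GOA on $G_j$, and ``$O_r$ at time $j$'' as its current matching.) The base case $j=1$ is immediate: GOA on $G_1$ picks the unique heaviest edge incident to user $1$, and $O_r$, starting from $O_k^0=0$ for every $k$, does exactly the same.

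For the inductive step, assume $O_r$ holds $M_{j-1}$ just before the $j$-th arrival, so $O_k^{j-1}$ equals the weight of the $M_{j-1}$-edge at basestation $k$, and $0$ when $k$ is unmatched in $M_{j-1}$. The core of the argument is a structural description of how a greedy max-weight matching changes when a single new left vertex is inserted. I would run GOA on $G_j$ and on $G_{j-1}$ in lockstep, processing the edges not incident to $j$ in the same relative weight order. The two runs agree until the $G_j$-run selects the first edge incident to $j$, call it $(j,k_1)$: this is the heaviest edge of $j$ whose basestation endpoint is still free at that stage, and since a basestation $k$ stays free in the $G_{j-1}$-run exactly until its $M_{j-1}$-edge of weight $O_k^{j-1}$ is processed, $k_1$ is precisely the basestation maximizing $w_{j,k}$ subject to $w_{j,k}>O_k^{j-1}$ — which is the basestation $O_r$ attaches user $j$ to. From that point on, the two runs differ exactly along one augmenting path $P=(j,k_1,u_1,k_2,u_2,\dots)$, where $u_1$ is the user the $G_{j-1}$-run would have matched to $k_1$ (now displaced), $k_2$ is the next free basestation that the $G_j$-run matches $u_1$ to, and so on, the path terminating when the current displaced user has no further free basestation available to it. Hence $M_j=M_{j-1}\triangle P$.

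It then remains to verify that $O_r$'s cascade of reassignments triggered by user $j$ realizes exactly this path $P$: after putting $j$ on $k_1$ and evicting $u_1$, $O_r$ re-applies its rule to $u_1$, and I would show along $P$ that it picks $k_2$, then its evictee picks $k_3$, and so on. Two things need checking. First, the quantity $O_k$ against which $O_r$'s rule tests always equals the weight of the edge that the $G_j$-run currently has at $k$ (the old $M_{j-1}$-weight for basestations not yet touched by the cascade, and the new weight for those already refreshed), so the test $w_{u,k}>O_k$ picks out exactly the basestations the $G_j$-run still leaves available. Second, a displaced user never returns to a basestation already on $P$ — in particular it never bumps $j$ off $k_1$ — since $w_{u,k_1}>w_{j,k_1}$ would have forced the $G_j$-run to select $(u,k_1)$ before $(j,k_1)$, contradicting the choice of $k_1$; hence the cascade traverses a simple path and terminates. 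Combining, $O_r$'s matching after step $j$ equals $M_{j-1}\triangle P=M_j$, closing the induction.

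The step I expect to be the real work is the structural claim in the second paragraph: the careful coupling of the two greedy runs and the proof that, after they diverge, their difference is a single augmenting path traversed in ``next available basestation'' order (with the right handling of ties in edge weights). I would also flag that this reassignment cascade can have length greater than one once $m\ge 3$, so $O_r$ at time $j$ must be understood as carrying out the whole chain of reassignments set off by the arrival of user $j$, not a single swap; for $m=2$ one can check the chain always has length at most one. Once the augmenting-path picture is established, the verification in the third paragraph is routine bookkeeping.
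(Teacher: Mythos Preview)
Your proof and the paper's share the same inductive frame---compare the greedy matching $M_{j-1}$ on users $1,\dots,j-1$ with $M_j$ on users $1,\dots,j$ and check that $O_r$ realises the difference---but they diverge at the key structural step. The paper asserts that $M_j$ and $M_{j-1}$ differ at a \emph{single} basestation $k'$, the one to which the new user is attached, so that the displaced user is simply dropped and $G_k^{j+1}=G_k^{j}$ for every $k\ne k'$. That claim is false in general: already for $m=2$, take $M_{j-1}$ with $u_1$ at $k_1$ and $u_2$ at $k_2$, with $w_{u_1,k_1}>w_{u_1,k_2}>w_{u_2,k_2}$, and let user $j$ arrive with $w_{j,k_1}$ exceeding every other weight; then GOA on $G_j$ puts $j$ at $k_1$ and $u_1$ at $k_2$, so both columns change. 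Your augmenting-path description of $M_j\triangle M_{j-1}$ is the correct structure, and the coupled-greedy-run argument you outline is precisely what is needed to establish it; the paper's inductive step glosses over, and in fact misstates, exactly this point.

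Consequently your cascading reading of $O_r$ is not a reinterpretation but a necessary repair: the paper's $O_r$ as literally defined (a single deletion, with the evicted user not re-inserted) does not reproduce GOA. Your caveat that for $m\ge3$ the cascade can touch several previously assigned edges is well placed and exposes a genuine tension with the lemma's ``one edge reassigned or deleted'' wording. Your $m=2$ observation is likewise sound: if the cascade reaches a second displaced user $u_2$, then $u_1$ has just beaten $u_2$ at $k_2$, which forces $(u_1,k_1)$ to be the heaviest of the four $\{u_1,u_2\}\times\{k_1,k_2\}$ edges in $M_{j-1}$'s greedy order, whence $w_{j,k_1}>w_{u_1,k_1}>w_{u_2,k_1}$ and $u_2$ has nowhere to go; so the path has at most one interior reassignment.
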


\begin{proof} The proof proceeds by induction. It is easy to verify the claim for the base case $j=2$. Assume the hypothesis is true for time $j$.
Let the output of GOA on an offline input from time $1, \dots, j$, and time $1, \dots, j+1,$ be 
$G_{k}^{j}, k=1,\dots, m$ and $G_{k}^{j+1}, k=1,\dots, m$, respectively. Note that $G_{k}^{j+1} \ne G_{k}^{j}$ if and
only if the weight of at least one edge among the $m$ newly arrived edges $w_{j+1,k}, k=1,\dots,m,$ is more than the corresponding element in $G_{k}^{j}, k=1,\dots, m$. Let the index of the heaviest edge for which $w_{j+1,k} > G_{k}^{j}, k=1,\dots, m$ be $k'$. Then $G_{k}^{j+1} =  \{G_{k}^{j} \cup w_{j+1,k'}\}, k\ne k', k=1,\dots, m$. More importantly note that the same is true for the online algorithm $O_r$ that deletes the $k'$ edge at time 
$j$ of $O_{k}^{j}$ and adds the heaviest edge from the newly arrived edges at time $j+1$ to $O_{k}^{j}$ if $w_{j+1,k} > O_{k}^{j}$, and $O_{k}^{j+1} =  \{O_{k}^{j} \cup w_{j+1,k'}\}, k\ne k', k=1,\dots, m$. 
This proves the claim, since we assumed that $G_{k}^{j} = O_{k}^{j}$.
\end{proof}

To prove Theorem \ref{thm:ReRa} we propose the following algorithm. 
Consider an algorithm ${\cal R}$ that does the following:
\begin{enumerate}
\item Let $j_0\in\{1,2,\ldots,m\}$ be chosen uniformly at random.
\item Use online algorithm $O_r$ to find the max-weight matching in the graph 
  \begin{equation}
    \label{eq:22}
G_{-j_0}=(V_1\cup (V_2\setminus\{j_0\}) ,V_1\times (V_2\setminus\{j_0\}))    
  \end{equation}
with basestation $j_0$ deleted from the original graph $G$. Denote
this matching as $\O(G_{-j_0})$.
\item Allocate the users in the matching $\O(G_{-j_0})$ to the
  $m-1$ basestations other than $j_0$, while all other users are
  allocated to basestation $j_0$ itself.
\end{enumerate}

\begin{proof}[Proof of Theorem \ref{thm:ReRa}] 
Denote the matching that is output by the online algorithm $O_r$ run on a bipartite
graph $G$ as $O_r(G)$ and denote its weight by $\MWM_{O_r}(G)$. Now, 
\begin{equation}
  \label{eq:221}
  E[\MWM_{O_r}(G_{-j_0})]\ge \frac{m-1}{m}\MWM_{O_r}(G),
\end{equation}
since $\O(G)$ without the edge to basestation $j_0$ is a possible
matching in $G_{-j_0}$, and $j_0$ is chosen uniformly at random. Using
\eqref{eq:221}, similar to \eqref{eq:31}, we see that
\begin{equation}
  \label{eq:331}
  E[TS_{{\cal R}}(\pi(W))] \ge E[\MWM_{O_r}(G_{-j_0})]\ge \frac{1}{2}E[\MWM_{O_r}(G_{-j_0})]\ge \frac{m-1}{2m}\MWM(G),
\end{equation}
since $O_r$ is a $2$-competitive algorithm as shown in Lemma \ref{lem:greedyoffline}, \ref{lem:reassign}.
 Since the optimal offline utility is upper-bounded by $\MWM(G)$, the expected average competitive ratio
of the algorithm ${\cal R}$ is given as
$E[\eta_{\avg}(\A)]\ge \frac{m-1}{2m}$.

\end{proof}
{\it Discussion:} In this section, we quantified the gain in terms of competitive ratio when some users can be reassigned. We showed that competitive ratio can be improved $4$ times even if only one user is allowed to be reassigned under the randomized user arrival order scenario, which is significant. Even though reassignments are costly, we showed that there is significant value in doing so.
\vspace{-.2in}
\section{Simulation Results}
We first consider the case of identical basestations and plot the competitive ratio obtained by our 
$k$-secretary based algorithm ${\cal A}_m(r)$ descibed in Section \ref{sec:arbbase}, and compare it with the max-weight algorithm. For the case of identical base stations, max-weight algorithm  associates each incoming user to the basestation with the minimum number of associated users. 
We plot the competitive ratio for different values of $r$, the number of test users that are used to determine the threshold for allocation for $m=10$ basestations by varying the number of users $n$ in Fig. \ref{fig:ksec}. We plot for both the idealistic scenario of i.i.d. uniform rates between $[0,10]$, and a correlation model, 
where mean rate for each user from basestation $1$ is $10$ and mean rate for each user from any other basestation is $5$. We see that the competitive ratio is close to $1$ as $n$ grows large for our algorithm, which is better than the worst case bound of  Theorem \ref{thm:arbweights} for both the cases. The max-weight algorithm has a competitive ratio of around $2$ for the i.i.d. uniform rates and competitive ratio of $2.5$ for the correlated model.  
\begin{figure}
\centering
\includegraphics[width=5in]{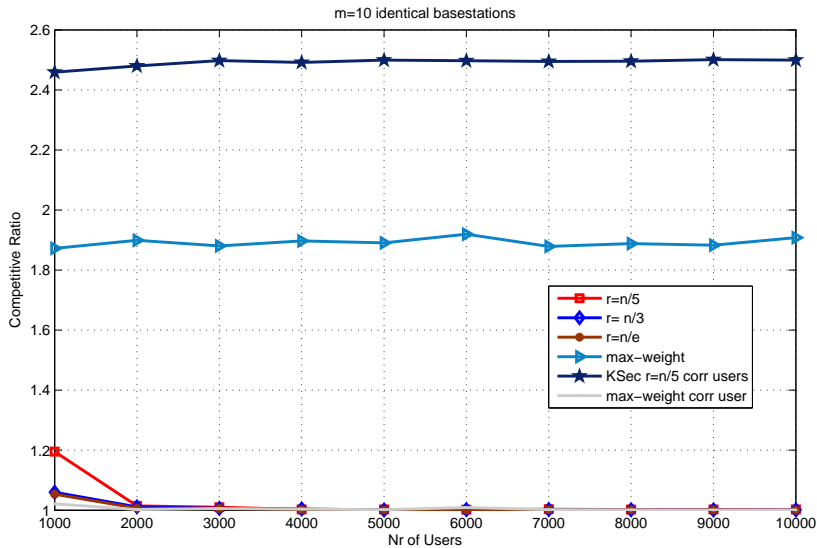}
\caption{Competitive ratio performance of $k$-Secretary problem based algorithm.}
\label{fig:ksec}
\end{figure} 
\begin{figure}
\centering
\includegraphics[width=4in]{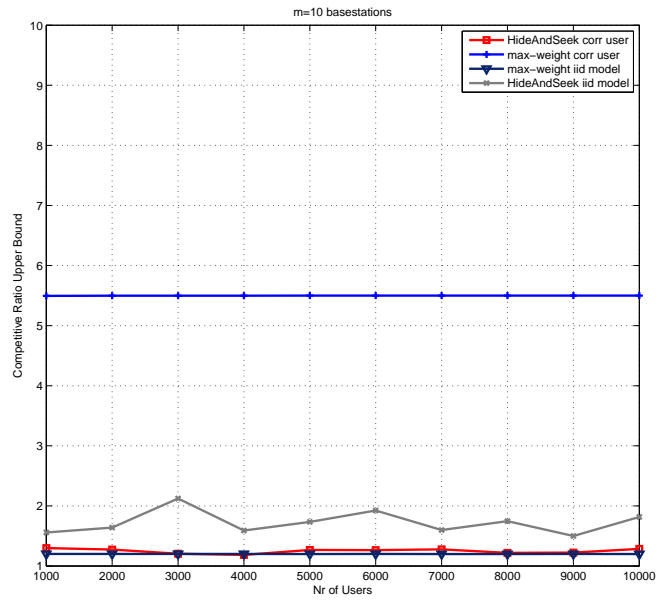}
\caption{Competitive ratio comparison of HideAndSeek and max-weight algorithm with correlated input.}
\label{fig:arbweightsworst}
\end{figure}
\begin{figure}
\centering
\includegraphics[width=4in]{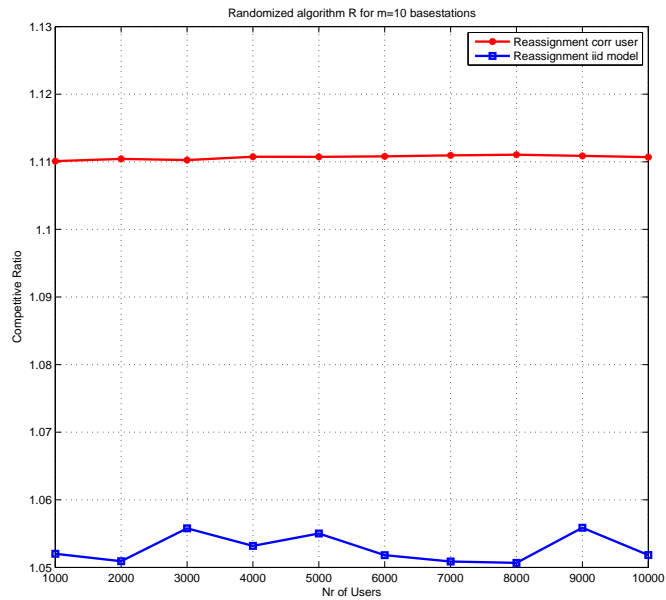}
\caption{Competitive ratio performance of Reassignment Algorithm.}
\label{fig:reassign}
\end{figure}

Next, we consider the arbitrary weights case, and plot the competitive ratio for the HideAndSeek algorithm and compare it with the max-weight algorithm in Fig. \ref{fig:arbweightsworst} for $m=10$. 
We consider the correlation model defined above, and the i.i.d. uniform rate model between $[0,10]$. From Fig. \ref{fig:arbweightsworst}, we see that even for this simple correlation model,  the competitive ratio of HideAndSeek algorithm is five times better than the max-weight algorithm, while in the i.i.d. model we see that competitive ratio of the HideAndSeek algorithm and the max-weight algorithm are similar, and actually max-weight algorithm outperforms the HideAndSeek algorithm by a little bit.
As discussed before, the competitive ratio of the max-weight algorithm depends on the number of users $n$ and base stations $m$. For instance, using the example given in Section \ref{sec:implication},  one can actually show that the competitive ratio of max-weight algorithm worsens as $n$ grows large, while the competitive ratio of the HideAndSeek algorithm is actually better than the worst case bound of $8$ we derived. Finally, in Fig. \ref{fig:reassign}, we plot the competitive ratio of the reassignment algorithm ${\cal R}$, where at any time one of the previously assigned user can be reassigned, for both the correlation model (described above) and the i.i.d. model for $m=10$. 
Comparing with Fig. \ref{fig:arbweightsworst}, we see that the performance of 
the reassignment algorithm is better than HideAndSeek algorithm for both the cases as expected, but more importantly it is better than the max-weight algorithm for the i.i.d. model as shown in 
Fig. \ref{fig:arbweightsworst}.

\section{Conclusions}
In this paper, we took first steps in understanding the fundamental sum-rate performance of basestation association problem when no assumptions are made about user statistics. 
We first showed that with the worst case input, the competitive ratio of any online algorithm grows linearly with the number of users and hence is too pessimistic. Then, we restricted ourselves to the more realistic case of random user arrival order and then derived online algorithms with constant competitive ratio. This is a significant result since the online algorithm only pays a constant penalty with respect to the offline algorithm that are provided with all the user information in advance. More importantly, we showed that the often used max-weight basestation allocation is very bad in terms of competitive ratio and the algorithms proposed in this paper provide a better alternative. 

The next logical step of this work is to include the functionality of allowing users to exit and not confining the system to a fixed time duration. The queuing theoretic model is the most natural for this purpose, where at each time some users arrive and their achievable rates are revealed upon arrival. Each user exits the system once its service is done, and the problem is to assign each user to one basestation to maximize the long term sum-rate of the system. This is part of on-going work.

\appendices
\section{Asymptotic Estimation for $\pr(S_n=d)$ in (\ref{eq:40})}
\label{sec:asympt-estim-prs_n=d}
We begin with the equation
\begin{equation}
  \label{eq:55}
  \pr(S_n=d)=\sum_{r+1\le i_1<i_2<\atop\cdots<i_d\le n}\left(\prod_{i\in\{i_1,i_2,\ldots,i_d\}}\frac{m-1}{i}\right)\left(\prod_{r+1\le i\le n \atop \ i\notin\{i_1,i_2,\ldots,i_d\}}1-\frac{m-1}{i}\right).  
\end{equation}
Each term in the summation above can be simplified as
\begin{equation}
  \label{eq:56}
  \prod_{i\in\{i_1,i_2,\ldots,i_d\}}\frac{m-1}{i}\prod_{r+1\le i\le n \atop \ i\notin\{i_1,i_2,\ldots,i_d\}}1-\frac{m-1}{i}
= (m-1)^d\prod_{r+1\le i\le n}\frac{1}{i}\prod_{r+1\le i\le n \atop \ i\notin\{i_1,i_2,\ldots,i_d\}}(i-m+1).
\end{equation}
Multiplying and dividing the RHS of (\ref{eq:56}) by $\prod_{r+1\le i\le n}(i-m+1)$ and canceling common terms, we further get
\begin{equation}
  \label{eq:57}
  \prod_{i\in\{i_1,i_2,\ldots,i_d\}}\frac{m-1}{i}\prod_{r+1\le i\le n \atop \ i\notin\{i_1,i_2,\ldots,i_d\}}1-\frac{m-1}{i}
= (m-1)^d\frac{r(r-1)\cdots(r-m+2)}{n(n-1)\cdots(n-m+2)}\prod_{i\in\{i_1,i_2,\ldots,i_d\}}\frac{1}{i-(m-1)}.
\end{equation}
Using (\ref{eq:57}) in (\ref{eq:55}), we get
\begin{equation}
  \label{eq:44}
\pr(S_n=d)=\frac{r(r-1)\cdots(r-m+2)}{n(n-1)\cdots(n-m+2)}(m-1)^d\sum_{r+1\le
    i_1<i_2<\atop\cdots<i_d\le n}\left(\prod_{i\in\{i_1,i_2,\ldots,i_d\}}\frac{1}{i-(m-1)}\right).  
\end{equation}
Letting
$A_d(t,n)=\sum_{t+1\le
    i_1<i_2<\atop\cdots<i_d\le n}\left(\frac{1}{i_1i_2\cdots i_d}\right)$,  
we see that the summation in (\ref{eq:44}) can be written as $A_d(r-(m-1),n-(m-1))$.
\begin{prop}
  \begin{equation}
    \label{eq:46}
A_d(t,n)=\frac{1}{d!}\left(\log\frac{n}{t}\right)^d\pm O\left(\frac{\left(\log\frac{n}{t}\right)^{d-1}}{t}\right).    
  \end{equation}
\label{prop:app}
\end{prop}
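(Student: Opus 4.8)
The plan is to prove Proposition~\ref{prop:app} by induction on $d$, mirroring the estimation carried out for Proposition~\ref{prop:modif-online-algor}. The starting point is a recursion for $A_d(t,n)$ obtained by splitting the sum according to the value of the largest index $i_d=j$; since $d-1$ further indices must fit strictly below $j$ and above $t$, we have
\[
A_d(t,n)=\sum_{j=t+d}^{n}\frac1j\,A_{d-1}(t,j-1),\qquad A_0\equiv 1 .
\]
Throughout I would use the elementary bounds $\sum_{i=t+1}^{n}\frac1i\le\log\frac nt$ (from $\frac1i\le\int_{i-1}^{i}\frac{dx}{x}$) and $\sum_{i=t+1}^{n}\frac1i\ge\log\frac{n+1}{t+1}\ge\log\frac nt-\frac1t$, together with $\sum_{i>t}\frac1{i^2}=O(\frac1t)$.

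The base case $d=1$ is immediate: $A_1(t,n)=\sum_{i=t+1}^{n}\frac1i=\log\frac nt\pm O(\frac1t)$. For the inductive step I would assume the estimate for $d-1$, substitute it into the recursion, and split $A_d(t,n)$ into a main term $\frac{1}{(d-1)!}\sum_{j=t+d}^{n}\frac1j\left(\log\frac{j-1}{t}\right)^{d-1}$ and an error term $\sum_{j=t+d}^{n}\frac1j\cdot O\!\left(\frac1t\left(\log\frac{j-1}{t}\right)^{d-2}\right)$; bounding each $\left(\log\frac{j-1}{t}\right)^{d-2}$ by $\left(\log\frac nt\right)^{d-2}$ and $\sum_j\frac1j\le\log\frac nt$ shows the error term is already $O\!\left(\frac1t\left(\log\frac nt\right)^{d-1}\right)$.

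It then remains to evaluate the main sum. First I would replace $\log\frac{j-1}{t}$ by $\log\frac{j}{t}$: by the mean value theorem the per-term cost is $O\!\left(\frac{(\log(j/t))^{d-2}}{j}\right)$, so the total cost is $O\!\left(\sum_j\frac{(\log(j/t))^{d-2}}{j^2}\right)=O\!\left(\frac{(\log(n/t))^{d-2}}{t}\right)$. The remaining sum $\frac{1}{(d-1)!}\sum_{j}\frac1j\left(\log\frac jt\right)^{d-1}$ I would compare with $\frac{1}{(d-1)!}\int_t^n\frac1x\left(\log\frac xt\right)^{d-1}dx=\frac{1}{d!}\left(\log\frac nt\right)^d$; since the integrand is single-peaked, the sum and the integral differ by at most a constant times its maximum over $[t,n]$, and writing $u=\log(x/t)$ and using the unimodality of $u\mapsto u^{d-1}e^{-u}$ gives $\max_{x\in[t,n]}\frac1x\left(\log\frac xt\right)^{d-1}\le\frac{(\log(n/t))^{d-1}}{t}$. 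Combining the three error contributions yields $A_d(t,n)=\frac{1}{d!}\left(\log\frac nt\right)^d\pm O\!\left(\frac1t\left(\log\frac nt\right)^{d-1}\right)$, closing the induction.

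The hard part will be the error bookkeeping rather than any single estimate: one has to check that every correction — the inductive $O$-term, the $\log\frac{j-1}{t}\to\log\frac jt$ replacement, the sum-versus-integral gap, and the end effect of the omitted terms $j<t+d$ — is dominated by $\frac1t\left(\log\frac nt\right)^{d-1}$. This is cleanest when $n/t$ is bounded away from $1$ (so that powers of $\log\frac nt$ are comparable), which is exactly the regime $r=\Theta(n)$, $m=o(n)$, $d$ fixed in which the Proposition is used in \eqref{eq:36} and \eqref{eq:42}; the implied constants may depend on $d$, which is harmless since $d$ is held fixed while $n\to\infty$ in every application.
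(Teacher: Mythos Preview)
Your proof is correct and follows essentially the same approach as the paper: induction on $d$, with the inductive step reducing to a comparison of a sum of the form $\sum_j\frac{1}{j}(\log\,\cdot\,)^{d-1}$ with the corresponding integral $\frac{1}{d}(\log\,\cdot\,)^d$, plus routine error bookkeeping. The only cosmetic difference is that the paper peels off the \emph{smallest} index $i_1$ (obtaining $A_{d-1}(i_1,n)$ and powers of $\log\frac{n}{i_1}$) whereas you peel off the \emph{largest} index $i_d$ (obtaining $A_{d-1}(t,j-1)$ and powers of $\log\frac{j}{t}$); the two recursions are symmetric and the resulting estimates are the same.
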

\vspace{-0.5in}
\begin{proof}
  The proof is by induction on $d$. The base case is $d=1$, which was shown in the proof of Proposition \ref{prop:modif-online-algor}. By induction hyptothesis, suppose that (\ref{eq:46}) is true for $d-1$ for some $d\ge2$. We have that
  \begin{eqnarray}
    A_d(n,t)&=&\sum_{i_1=t+1}^{n-d+1}\frac{1}{i_1}A_{d-1}(i_1,n),\\
&=&\sum_{i_1=t+1}^{n-d+1}\frac{1}{i_1}\left[\frac{1}{(d-1)!}\left(\log\frac{n}{i_1}\right)^{d-1}\pm O\left(\frac{\left(\log\frac{n}{i_1}\right)^{d-2}}{i_1}\right)\right],\\
\label{eq:47}&=&\frac{1}{(d-1)!}\sum_{i_1=t+1}^{n-d+1}\frac{1}{i_1}\left(\log\frac{n}{i_1}\right)^{d-1}\pm \sum_{i_1=t+1}^{n-d+1}O\left(\frac{\left(\log\frac{n}{i_1}\right)^{d-2}}{i^2_1}\right),
  \end{eqnarray}
where the induction hyptothesis is used in the first step. The first term in (\ref{eq:47}) can be approximated by the integral 
\begin{equation}
  \label{eq:45}
  I=\int_{x=i_1}^{n-d+2}\frac{1}{x}\left(\log\frac{n}{x}\right)^{d-1}dx,  
\end{equation}
with the error bounded as
\begin{eqnarray}
  \label{eq:49}
  \left|\sum_{i_1=t+1}^{n-d+1}\frac{1}{i_1}\left(\log\frac{n}{i_1}\right)^{d-1}-I\right|
&\le& \sum_{i_1=t+1}^{n-d+1}\left|\frac{1}{i_1}\left(\log\frac{n}{i_1}\right)^{d-1}-\frac{1}{i_1+1}\left(\log\frac{n}{i_1+1}\right)^{d-1}\right|,\\
&\le &\sum_{i_1=t+1}^{n-d+1}\left|\frac{1}{i_1}\left(\log\frac{n}{i_1}\right)^{d-1}-\frac{1}{i_1+1}\left(\log\frac{n}{i_1}\right)^{d-1}\right|,\\
\label{eq:50}&=& \sum_{i_1=t+1}^{n-d+1}\frac{\left(\log\frac{n}{i_1}\right)^{d-1}}{i_1(i_1+1)},\\
\label{eq:53}&=& O\left(\frac{\left(\log\frac{n}{t}\right)^{d-1}}{t}\right),
\end{eqnarray}
where the last step follows by taking the maximum term in (\ref{eq:50}) and multiplying by the number of terms. The second term in (\ref{eq:47}) also has the same order behavior as (\ref{eq:50}), which can, once again, be verified by taking the maximum term and multiplying by the number of terms. Now, the integral in (\ref{eq:45}) simplifies as
\begin{eqnarray}
  \label{eq:51}
  I&=&-\int_{x=i_1}^{n-d+2}\left(\log\frac{n}{x}\right)^{d-1}d\left(\log\frac{n}{x}\right),\\
\label{eq:52}&=&\frac{1}{d}\left(\log\frac{n}{i_1}\right)^{d-1}-\frac{1}{d}\left(\log\frac{n}{n-d+2}\right).
\end{eqnarray}
Using (\ref{eq:52}) and (\ref{eq:53}) in (\ref{eq:47}), the proposition is proved.
\end{proof}
Using Proposition \ref{prop:app} in (\ref{eq:44}), we get
\begin{eqnarray}
  \label{eq:54}
  \pr(S_n=d)&=&\frac{r(r-1)\cdots(r-m+2)}{n(n-1)\cdots(n-m+2)}(m-1)^dA_d(r-m+1,n-m+1),\\
&\to&\frac{r^{m-1}}{n^{m-1}}(m-1)^d\frac{1}{d!}\left(\log\frac{n}{t}\right)^d,\\
&=&\left(\frac{r}{n}\right)^{m-1}\frac{1}{d!}\left((m-1)\log_e\frac{n}{r}\right)^d,\\
\end{eqnarray}
where we have used the assumptions that $m=o(n)$ and $r$ grows linearly with $n$.

\bibliographystyle{../Work/TIFR/Research/IEEEtran}
\bibliography{../Work/TIFR/Research//IEEEabrv,../Work/TIFR/Research/Research}
\end{document}